\newcommand{\bignone}{}
\newcommand{\email}[1]{{\textit{Email:} \texttt{#1}}}
\newcommand{\tmem}[1]{{\em #1\/}}
\newcommand{\tmop}[1]{\ensuremath{\operatorname{#1}}}
\newcommand{\tmtextit}[1]{{\itshape{#1}}}
\newcommand{\tmtextsc}[1]{{\scshape{#1}}}
\newenvironment{enumeratealpha}{\begin{enumerate}[a{\textup{)}}] }{\end{enumerate}}
\newenvironment{enumerateroman}{\begin{enumerate}[i.] }{\end{enumerate}}
\newenvironment{proof}{\noindent\textbf{Proof\ }}{\hspace*{\fill}$\Box$\medskip}
\definecolor{grey}{rgb}{0.75,0.75,0.75}
\definecolor{orange}{rgb}{1.0,0.5,0.5}
\definecolor{brown}{rgb}{0.5,0.25,0.0}
\definecolor{pink}{rgb}{1.0,0.5,0.5}
\newtheorem{corollary}{Corollary}
{\theorembodyfont{\rmfamily}\newtheorem{example}{Example}}
\newtheorem{lemma}{Lemma}
\newtheorem{proposition}{Proposition}
\newtheorem{theorem}{Theorem}
\newcommand{\maxent}{\tmtextsc{MaxEnt}}
\numberwithin{equation}{section}
\numberwithin{table}{section}
\numberwithin{figure}{section}
\numberwithin{lemma}{section}
\numberwithin{proposition}{section}
\numberwithin{corollary}{section}
\begin{document}

\title{Analytical Forms for Most Likely Matrices Derived from Incomplete
Information}
\author{Kostas N. Oikonomou\thanks{\email{ko@research.att.com}}\\
AT\&T Labs Research\\
Middletown, NJ, 07748, U.S.A.}

\maketitle

\begin{abstract}
  Consider a rectangular matrix describing some type of communication or
  transportation between a set of origins and a set of destinations, or a
  classification of objects by two attributes. The problem is to infer the
  entries of the matrix from limited information in the form of constraints,
  generally the sums of the elements over various subsets of the matrix, such
  as rows, columns, etc, or from bounds on these sums, down to individual
  elements. Such problems are routinely addressed by applying the maximum
  entropy method to compute the matrix numerically, but in this paper we
  derive analytical, closed-form solutions. For the most complicated cases we
  consider the solution depends on the root of a non-linear equation, for
  which we provide an analytical approximation in the form of a power series.
  Some of our solutions extend to 3-dimensional matrices.
  
  Besides being valid for matrices of arbitrary size, the analytical solutions
  exhibit many of the appealing properties of maximum entropy, such as precise
  use of the available data, intuitive behavior with respect to changes in the
  constraints, and logical consistency.
\end{abstract}

\section{Introduction}

Consider a set of $n$ origins communicating with a set of $m$ destinations.
For our purposes it suffices that each origin is connected to each
destination; the exact nature of the connection is not important. The
communication may be in the form of transportation, e.g. the origins and
destinations may be cities or other geographic locations, and people travel
from one to another by some means, or commodities are transported from one to
another in some fashion. Or the origins and destinations may be nodes
connected by a communications network, with various sorts of traffic flowing
from each source to each destination. In either of these cases, the
transportation or communication can be represented by an rectangular $n \times
m$ {\tmem{trip}} or {\tmem{traffic}} matrix whose $i, j$th entry gives the
number of trips, volume of traffic, units of a commodity, etc. from the $i$th
origin to the $j$th destination. (The distinction between origins and
destinations is not mandatory; one could take $n = m$ and think just of a set
of $n$ locations.) In a different setting, we have a set of objects with two
attributes, say height and weight, color and shape, success/failure of a test
and test condition, and the objects are placed in a table according to the
$n$-valued first attribute and the $m$-valued second attribute. In this
setting the $n \times m$ matrix is known as a (2-dimensional)
\tmtextit{contingency table} whose $(i, j)$th entry is the number of objects
whose 1st attribute has the $i$th value and 2nd attribute the $j$th value.

Whichever of these two settings obtains, we are interested in the situation
where we have {\tmem{limited}} or incomplete information about the matrix: we
do not know the individual elements, but know less detailed characteristics
such as the totals of the rows and/or columns, or of some of them, the total
sum of the matrix, or we have bounds on some of these quantities, or in
addition we know the values of some individual elements or have bounds on
them. The problem then is how to \tmtextit{infer} all the matrix elements from
this information, and, in this paper, we are interested in solving the problem
analytically. It is well known how to find numerical solutions to these
inference problems by numerical entropy maximization.

\paragraph{Most likely matrices and maximum entropy}

We approach the problem by regarding the matrix as constructed from a known
number of elements (trips, traffic units, etc), which we will think of as
balls, to be placed into an $n \times m$ array of boxes. We will refer to the
number of ways (assignments of balls to boxes) in which a given matrix $X$ can
be built as its \tmtextit{number of realizations}, $\#(X)$. If the information
$I$ is known about $X$, we may also regard it as {\tmem{constraints}} that $X$
has to satisfy, and we write $\#(X|I)$ for the number of realizations of $X$
that accord with $I$ or satisfy the constraints $I$. For example, if what we
know about the $2 \times 2$ matrix $X$, $x_{i j} \in \mathbb{N}$, is that its
row sums are 7 and 3, some possibilities are
\[ X_1 = \left(\begin{array}{cc}
     3 & 4\\
     2 & 1
   \end{array}\right), \hspace{1em} X_2 = \left(\begin{array}{cc}
     4 & 3\\
     2 & 1
   \end{array}\right), \hspace{1em} X_3 = \left(\begin{array}{cc}
     4 & 3\\
     1 & 2
   \end{array}\right), \hspace{1em} X_4 = \left(\begin{array}{cc}
     2 & 5\\
     2 & 1
   \end{array}\right), \hspace{1em} X_5 = \left(\begin{array}{cc}
     1 & 6\\
     3 & 0
   \end{array}\right) . \]
In fact there are 8 possible 1st rows and 4 possible 2nd rows, so 32 matrices
satisfy these constraints. Further, for the above examples, $\#(X_1 |I)
=\#(X_2 |I) =\#(X_3 |I) = 10! / (1! 2! 3! 4!) = 12600, \; \#(X_4 |I) = 10! /
(5! 2! 2! 1!) = 7560, \; \#(X_5 |I) = 10! / (1! 3! 6!) = 840$ {\footnote{We
assume that the balls are distinguishable. The boxes are distinguishable,
being particular elements of a matrix.}}. We will refer to the matrix
$\hat{X}$ for which $\#(X|I)$, given by a multinomial coefficient, is maximum,
as the \tmtextit{most likely} matrix given the information/constraints $I$.
``Most likely'' may have probability connotations for some, but we use it only
as a shorthand for ``matrix that can be realized in the greatest number of
ways'', which has nothing to do with probability, it is merely counting.

If the constraints are convex (in this paper they will be linear), and they
specify the sum of all the elements, the discrete problem of maximizing
$\#(X|I)$ can be turned into a continuous concave maximization problem via the
Stirling approximation to the factorial: the log of the multinomial
coefficient is approximated by the \tmtextit{entropy} of the $x_{i j}$. This
continuous approximation is well-known, and in fact dates back to Boltzmann's
(1847-1906) combinatorial formulation of statistical mechanics where molecules
are assigned to boxes; see e.g. {\cite{Som}}. Thus our discrete most likely
matrix problem connects to the extensive body of work on maximum entropy
({\maxent}): see the works {\cite{JaynesCP}}, {\cite{JL}} of E. T. Jaynes, the
books {\cite{Tr69}}, {\cite{KK}}, and the series of {\maxent} conference
proceedings {\cite{maxent-kluwer}} and {\cite{maxent-aip}}{\footnote{The
latter with the unfortunate adoption of Microsoft Word for the typesetting of
mathematical papers.}}, to name a few. So, as long as the total sum is known,
the discrete most likely matrix problem and its continuous {\maxent} analogue
are equivalent to within the Stirling approximation, and we will sometimes
refer to one, sometimes to the other.

The combinatorial rationale that we consider here is appealing because of its
simplicity: it is just counting. In addition, {\maxent} has intuitive appeal
as maximizing uncertainty while conforming to precisely the available
information. More importantly, it has a powerful {\tmem{axiomatic}} basis as
well: see {\cite{Skilling1989}}, and {\cite{Caticha2006}} for recent
developments.

\paragraph{Summary and background}

In this paper we derive analytical, closed-form solutions to a set of maximum
entropy problems having to do with $n \times m$ matrices subject to linear
constraints. The constraints have the form of equalities or inequalities
(upper bounds) on sums over various subsets of the matrix, e.g. rows, columns,
the whole matrix, the diagonal, individual elements, etc. In {\S}\ref{sec:rcs}
to {\S}\ref{sec:boundsrc} we consider known row, column, and total sums, as
well as upper bounds on them. We observe that when the total sum is not known,
the most likely matrix is not the {\maxent} matrix, but it has a simple
relationship to a certain {\maxent} matrix. In {\S}\ref{sec:bindiv} we
consider upper bounds on row sums and on individual elements. Finally, in
{\S}\ref{sec:symm}, we investigate the effect of having symmetric information
in combination with bounds on sums and specified individual elements,
including an extension to 3-dimensional matrices. Table \ref{tab:summary} in
{\S}\ref{sec:concl} summarizes the types of constraints that we consider. In
the most complicated cases the solutions depend on the root of a single
non-linear equation, but even in those cases we find an analytical power
series approximation to the root, hence to the matrix elements themselves. The
analytical forms allow us to treat matrices of arbitrary size, reveal the
exact structure of the most likely/{\maxent} matrix, and allow us to see
explicitly the robustness of the solution to changes in the constraints, and
its behavior with respect to uncertainty in the data. These features
demonstrate the {\tmem{logical precision}} of the {\maxent} method and are
inaccessible via numerical solutions.

An extensive and in-depth study of {\maxent} matrices in transportation
analysis is {\cite{grav}}; an introduction can be found in {\cite{KK}}, and a
recent reference is {\cite{Bilich2008}}. Various aspects of matrices
characterizing traffic in IP networks, including numerical estimation from
incomplete data, are studied in {\cite{Alderson2006}}, {\cite{ZRLD}}, and the
references therein{\footnote{We say ``estimation'' because the methods used do
not have the same logical standing as {\maxent}. Also, the problem of
estimating traffic in a real IP network is significantly more complex than the
problems considered in this paper.}}. A semi-analytical derivation of most
likely traffic matrices subject to a total cost constraint is in {\cite{KO}}.
With respect to contingency tables, {\cite{KK}} provides an introduction while
{\cite{GoodME}} derives fundamental results on the ``vanishing of
interactions'' in {\maxent} multi-dimensional tables. For a small sample of
other applications of {\maxent} see {\cite{Cmiel2002}} and
{\cite{Sengupta1991}} (economics and econometrics), {\cite{Kouvatsos1992}}
and {\cite{Ku-Mahamud1993}} (queueing problems), and {\cite{Trivedi2002}}
(systems theory).

\section{Specified row sums and some column sums}

\label{sec:rcs}We begin by considering a small extension of a problem whose
solution is already known in the literature in order to introduce the concepts
and general methodology used in the rest of the paper. Phrasing the discussion
in terms of an $n \times m$ matrix $X$ describing the traffic from $n$ origins
to $m$ destinations, suppose we have the following information (or
constraints) $I$ about it:
\begin{enumerate}
  \item The total traffic from each origin: $\forall i, \sum_j \bignone x_{i
  j} = u_i$.
  
  \item The total traffic to each of the first $\ell \leqslant m$
  destinations: $\forall j \leqslant \ell, \sum_i x_{i j} = v_j \bignone$
\end{enumerate}
We assume that the information is consistent, i.e.$\sum_i u_i \geqslant \sum_j
v_j$. This information also specifies the total traffic $s$ in the network: $s
= \sum_i u_i$.

To find the most likely traffic matrix $\hat{X}$ that follows from the
information $I$, given that the sum of all the entries is $s$, we construct
$X$ by distributing the $s$ units of traffic into $n m$ boxes so that $x_{i
j}$ of them go in box $(i, j)$. The number of ways in which this can be done
is
\begin{equation}
  \label{eq:Nd} \left(\begin{array}{c}
    s\\
    x_{11}, \ldots, x_{1 m}, x_{21}, \ldots, x_{2 m}, \ldots, x_{n 1}, \ldots,
    x_{n m}
  \end{array}\right) = \frac{s !}{\prod_{i, j} x_{i j} !} = \; \#(X \mid s),
\end{equation}
where the notation indicates that $s$ is known. To render the maximization of
$\#(X \mid s)$ tractable, and, at the same time, achieve a relatively simple
solution, we treat it as a {\tmem{continuous}} problem and maximize the log of
$\#(X \mid s)$ using the Stirling approximation
\begin{equation}
  \label{eq:stirling} \ln x! \; = \; x \ln x - x + \frac{1}{2} \ln x + \ln
  \sqrt{2 \pi} + \frac{\vartheta}{12 x}, \hspace{1em} \vartheta \in (0, 1),
\end{equation}
which is defined for all $x > 0$ by $x! = \Gamma (x + 1)$. Using the first two
terms of (\ref{eq:stirling}) and noting that $\sum_{i, j} x_{i j} = s$ is
given, the problem becomes
\begin{equation}
  \label{eq:maxent} \text{maximize} \hspace{2em} - \sum_{i, j} x_{i j} \ln
  x_{i j},
\end{equation}
subject to
\[ \sum_j x_{i j} = u_i \hspace{1em} \text{for} \; i = 1, \ldots, n,
   \hspace{2em} \sum_i x_{i j} = v_j \hspace{1em} \text{for} \; j = 1, \ldots,
   \ell . \]
The expression to be maximized is the {\tmem{entropy}} of the set of demands
$x_{i j}$. (Usually, e.g. in information theory, entropy is defined for a
vector whose entries sum to 1. What we use here is more properly referred to
as {\tmem{combinatorial}}, as opposed to {\tmem{information}},
entropy{\footnote{This usage goes back to Boltzmann's combinatorial
formulation of statistical mechanics, see {\cite{Som}}.}}.) Because the
entropy is a strictly concave function, the problem (\ref{eq:maxent}) has a
unique solution which can be found by forming the Lagrangean (details in
\S\ref{sec:concave})
\[ \Phi \; = \; - \sum_{i, j} x_{i j} \ln x_{i j} - \sum_i \lambda_i \Bigl(
   \sum_j x_{i j} - u_i \Bigr) - \sum_j \mu_j \Bigl( \sum_i x_{i j} - v_j
   \Bigr) . \]
It follows that $x_{i j} = e^{- \lambda_i - \mu_j - 1}$ if $j \leqslant \ell$
and $e^{- \lambda_i - 1}$ if $j > \ell$. Denoting $e^{- \lambda_i - 1}$ by
$\lambda'_i$ and $e^{- \mu_j}$ by $\mu_j'$, and then eliminating the primes to
simplify the notation,
\begin{equation}
  \label{eq:grdij} x_{i j} = \left\{ \begin{array}{ll}
    \lambda^{}_i \mu_j, & j \leqslant \ell\\
    \lambda_i, & j > \ell
  \end{array} \right., \hspace{2em} \lambda_i, \mu_j > 0.
\end{equation}
The origin and destination constraints imply that
\begin{equation}
  \begin{array}{cccc}
    \lambda_i (\mu_1 + \cdots + \mu_{\ell} + m - \ell) & = & u_i, & i = 1,
    \ldots, n\\
    (\lambda_1 + \cdots + \lambda_n) \mu_j & = & v_j, & j = 1, \ldots, \ell .
  \end{array} \label{eq:grtemp}
\end{equation}
Adding the first set of constraints together and doing the same with the
second set we get $(\lambda_1 + \cdots + \lambda_n) (\mu_1 + \cdots +
\mu_{\ell} + m - \ell) = u_1 + \cdots + u_n = s$ and $(\lambda_1 + \cdots +
\lambda_n) (\mu_1 + \cdots + \mu_{\ell}) = v_1 + \cdots + v_{\ell}$. If we now
let $\lambda$ be the sum of the $\lambda_i$ and $\mu$ that of the $\mu_j$, it
follows that if $\ell < m$
\[ \lambda = \frac{s - (v_1 + \cdots + v_{\ell})}{m - \ell}, \hspace{1em} \mu
   = \frac{(n - \ell) (v_1 + \cdots + v_{\ell})}{s - (v_1 + \cdots +
   v_{\ell})} . \]
So from (\ref{eq:grtemp}),
\begin{equation}
  \label{eq:grd} \lambda_i = \frac{\bigl( s - (v_1 + \cdots + v_{\ell}) \bigr)
  u_i}{(n - \ell) s}, \hspace{1em} \mu_j = \frac{(m - \ell) v_j}{s - (v_1 +
  \cdots + v_{\ell})} .
\end{equation}
Using this in (\ref{eq:grdij}), we finally get
\begin{equation}
  \label{eq:gm} \hat{x}_{i j} = \left\{ 
  \begin{array}{ll}
    \displaystyle
    \frac{u_i v_j}{s}, & j \leqslant \ell,\\
    \displaystyle
    \frac{s - (v_1 + \cdots + v_{\ell})}{m - \ell}  \frac{u_i}{s}, & j > \ell,
  \end{array} \right. \hspace{1em} i = 1, \ldots, n.
\end{equation}
Now if $\ell = m$, (\ref{eq:grtemp}) and the two equations following it become
$\lambda_i \mu = u_i$, $\mu_j \lambda = v_j$, $\lambda \mu = s$, from which it
follows that $\lambda_i \mu_j = u_i v_j / s$; thus (\ref{eq:gm}) is valid even
when $\ell = m$. Therefore the most likely matrix $\hat{X}$ consists of an $n
\times \ell$ left-hand part whose entries are given in the 1st line of
(\ref{eq:gm}), and a possibly empty right-hand part consisting of $m - \ell$
identical columns, each of which is described by the 2nd line of
(\ref{eq:gm}).

The solution $\hat{x}_{i j} = u_i v_j / s$ for all $i, j$ is known as the
{\tmem{gravity model}} for the traffic. This model has its origins in
transportation analysis, in connection with the numbers of trips taken between
$n$ origins and $m$ destinations, which are cities of known populations; $X$
is then referred to as a ``trip matrix''. See {\cite{KK}} for an introduction,
and {\cite{grav}} for an in-depth treatment. In the context of contingency
tables this model is known as the ``independence model'' under marginal
constraints. An important generalization to \tmtextit{multi-dimensional} $n_1
\times n_2 \times \cdots$ contingency tables is given in the classic paper
{\cite{GoodME}} of I. J. Good.

The form of $\hat{X}$ is {\tmem{conceptually robust}}. For example, take the
model with $n = m = \ell$ and suppose the destination constraints are removed.
Then all the $\mu_j'$ in (\ref{eq:grdij}) can be taken equal to 1, and the
solution is $\hat{x}_{i j} = u_i / n$, $\forall j$. Similarly, if the source
constraints are removed, $\hat{x}_{i j} = v_j / n$, $\forall i$. And if both
types of constraints are removed leaving just $\sum_{i, j} x_{i j} = s$, then
$\hat{x}_{i j} = s / n^2$. We see that {\maxent} yields independence and as
much symmetry/uniformity as possible, subject to the given information.

\section{Bounds on row sums}

\label{sec:bounds2}Suppose that the only information we have on the $n \times
m$ matrix $X$ is upper bounds on the row sums:
\begin{equation}
  \label{eq:bo} \forall i, \hspace{1em} \sum_j x_{i j} \leqslant u_i .
\end{equation}
We will first show that with $x_{i j} \in \mathbb{N}$, the most likely matrix
$\hat{X}$ has its row sums in fact \tmtextit{equal} to $u_1, \ldots, u_n$.
Indeed, let $X$ be a matrix satisfying the constraints (\ref{eq:bo}), and with
$\sum_{i, j} x_{i j} = s$. Suppose that row $i$ sums to strictly less than
$u_i$. This means that there is a $j$ such that if we increase $x_{i j}$ by 1,
the resulting matrix $X'$ also satisfies the constraints. By (\ref{eq:Nd}),
$X'$ is more likely than $X$:
\[ \frac{\#(X')}{\#(X)} \; = \; \frac{(s + 1) !}{s !} \; \frac{x_{i j}
   !}{(x_{i j} + 1) !} \; = \; \frac{s + 1}{x_{i j} + 1} \; > \; 1. \]
Proceeding in this way we can keep increasing the elements of the matrix while
also increasing the value of $\#(X)$, until all constraints are satisfied with
equality and the rows sum to exactly $u_1, \ldots, u_n$. This reduces the
problem to the one considered in \ref{sec:rcs}, where the total demand from
each origin is known (as well as the total demand in the whole network). So
the solution to (\ref{eq:bo}) is simply
\begin{equation}
  \label{eq:bo2} \forall i, \hspace{1em} \hat{x}_{i j} = \frac{u_i}{n} .
\end{equation}
This answer depends exactly on the given information and on nothing else. The
argument we gave above also shows that {\tmem{lower}} bounds on the row sums
are immaterial.

\begin{example}
  {\color{red} \label{ex:ub}}Suppose we have a $10 \times 10$ matrix, and the
  upper bounds on the row sums are $u_1, \ldots, u_{10} = 20, 20, 24, 30, 30,
  36, 36, 36, 36, 40$, measured in some units. We then find that the total
  number of matrices that accord with the information $I$ is
  \[ M (I) \; = \; 30045015^2 \cdot 131128140 \cdot 847660528^2 \cdot
     4076350421^4 \cdot 10272278170 \; \approx \; 2.41 \cdot 10^{89} . \]
  (The number of solutions in $\mathbb{N}$ of the equation $x + x_2 + \cdots
  + x_{10} = u_i$ is simply the number of {\tmem{compositions}} of $u_i$ into
  10 parts, equal to $\binom{u_i + 9}{9}$. And the inequality version can be
  handled by summing $\binom{b + 9}{9}$ over $0 \leqslant b \leqslant u_i$.)
  The most likely matrix $\hat{X}$ is one of these $2.4 \cdot 10^{89}$
  matrices. We can also find the number of matrices that satisfy (\ref{eq:bo})
  with equality. This turns out to be
  \[ M (I_=) \; = \; 10015005^2 \cdot 38567100 \cdot 211915132^2 \cdot
     886163135^4 \cdot 2054455634 \; \approx \; 2.20 \cdot 10^{83}, \]
  and $\hat{X}$ is one of these matrices. By (\ref{eq:Nd}) and (\ref{eq:bo2}),
  $\hat{X}$ can be realized in $\#( \hat{X}) \; = 308! / \bigl( (2!)^2 2.4!
  (3!)^2 (3.6!)^4 4! \bigr)^{10} \approx 1.46 \cdot 10^{549}$ ways, where we
  took some liberties by allowing non-integral entries.
  
  How much more likely is $\hat{X}$ than a matrix $X'$ which also obeys $I$
  and is the same as $\hat{X}$ except that its 5th row is
  (2,2,2,2,2,4,4,4,4,4), a slight deviation from (3,...,3)? We see that $\#(
  \hat{X}) /\#(X') = \; (2!)^5 (4!)^5 / (3!)^{10} \approx 4.21$. If row 8 is
  (2,2,2,2,2,2,2,6,8,8) instead of (3.6,...,3.6), a larger deviation, the
  likelihood of $X'$ is significantly smaller: $\; \#( \hat{X}) /\#(X') =
  (2!)^7 6! (8!)^2 / (3.6!)^{10} \approx 813.9$. Note that the units chosen
  for the $u_i$ affect the size of the absolute numbers above, as well as the
  ratios; choosing finer units increases both the numbers and the ratios
  dramatically. For example, if all the $u_i$ are multiplied by 10, the two
  likelihoods computed above become $1.8 \cdot 10^7$ and $4.2 \cdot 10^{32}$.
\end{example}

\section{Total sum and bounds on row sums}

\label{sec:bounds1}Now suppose that besides the upper bounds on the row sums
we also know the total sum $s$:
\begin{equation}
  \label{eq:bounds1} \sum_{i, j} x_{i j} = s, \hspace{2em} \tmop{and}
  \hspace{1em} \forall i, \hspace{1em} \sum_j x_{i j} \leqslant u_i .
\end{equation}
For a solution to exist, we must have $s \leqslant u_1 + \cdots + u_n .$ By
Corollary \ref{cor:H} we then have
\begin{equation}
  \label{eq:dtb} x_{i j} = \lambda_i \mu, \hspace{2em} 0 < \lambda_i \leqslant
  1.
\end{equation}
To proceed, we consider the solution to a simpler problem: given $a, b_1,
\ldots, b_n > 0$, what is the maximum entropy vector $x^{\ast}$ satisfying
$x_1 + \cdots + x_n = a$ and $\forall i, x_i \leqslant b_i$?

\subsection{The vector case}

\begin{lemma}
  \label{le:ub}The maximum-entropy vector $x^{\ast}$ satisfying $\sum_i x_i =
  a$ and $\forall i \; 0 \leqslant x_i \leqslant b_i$, where $a \leqslant b_1
  + \cdots + b_n$, is found as follows:
  \begin{enumerateroman}
    \item Arrange the $b_i$ in increasing order, and permute the $x_i$
    accordingly.
    
    \item Find the largest $j \in \{0, \ldots, n\}$ for which $b_1 + \cdots +
    b_j + (n - j) b_j \leqslant a$. Let that be $k$.
  \end{enumerateroman}
  Then $x^{\ast}_1 = b_1, x^{\ast}_2 = b_2, \ldots, x^{\ast}_k = b_k$ and
  $x^{\ast}_{k + 1} = \cdots = x^{\ast}_n = \frac{a - (b_1 + \cdots + b_k)}{n
  - k}$.
\end{lemma}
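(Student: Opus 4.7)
Since the Shannon entropy $-\sum_i x_i \ln x_i$ is strictly concave on the feasible polytope $\{x\ge0:\ \sum_i x_i = a,\ x_i\le b_i\}$ (which is nonempty by $a\le\sum b_i$), there is a unique maximizer and the KKT conditions are both necessary and sufficient. I would therefore attach a Lagrange multiplier $\mu\in\mathbb{R}$ to the equality $\sum x_i=a$ and nonnegative multipliers $\nu_i\ge0$ to the bound constraints $x_i\le b_i$, with complementary slackness $\nu_i(b_i-x_i)=0$. Differentiating the Lagrangean yields $x_i^{*}=e^{-1-\mu-\nu_i}$, so if I set $c:=e^{-1-\mu}$, then $x_i^{*}=c\,e^{-\nu_i}\le c$, with equality iff $\nu_i=0$. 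Combining with complementary slackness, exactly two cases arise for each $i$: either $\nu_i=0$ and $x_i^{*}=c\le b_i$, or $\nu_i>0$, in which case $x_i^{*}=b_i<c$. In short,
\[
  x_i^{*}=\min(c,b_i),
\]
for a single constant $c>0$ common to all coordinates. This is the key structural fact; the rest is book-keeping to identify $c$.

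Next I would exploit the characterization $x_i^{*}=\min(c,b_i)$ to justify the sorting step. If $b_1\le b_2\le\cdots\le b_n$, then the set $\{i:b_i<c\}$ is a prefix $\{1,\ldots,k\}$, and the remaining coordinates receive the common value $c$. Substituting into $\sum_i x_i^{*}=a$ gives
\[
  b_1+\cdots+b_k+(n-k)c=a,\qquad c=\frac{a-(b_1+\cdots+b_k)}{n-k}.
\]
For this partition to be consistent with $x_i^{*}=\min(c,b_i)$, I need $b_k\le c\le b_{k+1}$ (with the inequality on either end dropped if $k=0$ or $k=n$). The left inequality $b_k\le c$ is equivalent to $b_1+\cdots+b_k+(n-k)b_k\le a$, which is precisely the selection rule in step (ii). The right inequality $c\le b_{k+1}$ is equivalent to $b_1+\cdots+b_{k+1}+(n-k-1)b_{k+1}\ge a$, i.e.\ the condition of step (ii) fails at $k+1$; this is exactly what the word ``largest'' encodes.

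What remains is to verify that the largest such $k$ in $\{0,\ldots,n\}$ does exist and gives a feasible $c$. Existence follows because $k=0$ always qualifies (the inequality reads $n b_0:=0\le a$ if we interpret an empty sum as $0$, and even unsorted the claim $b_1+\cdots+b_0+n b_0\le a$ is vacuous; more properly, $k=0$ is allowed and the formula reduces to $x_i^{*}=a/n$ provided $a/n\le b_1$). The case $k=n$ corresponds to $a=b_1+\cdots+b_n$ where every constraint is tight. Finally one checks $c>0$: since $k$ is maximal we have $c\le b_{k+1}$ if $k<n$, and $c\ge b_k\ge0$; the case $c=0$ only occurs if $a=\sum b_i$ and $b_k=0$, which can be ruled out by assuming positive $b_i$ without loss of generality.

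The main obstacle I anticipate is not the Lagrangean calculation, which is routine once one writes it down, but the combinatorial matching of the KKT ``two-level'' structure $x_i^{*}\in\{c,b_i\}$ to the specific sorting rule in (i)--(ii): one must show both that the bounded coordinates are necessarily those with the smallest $b_i$ and that the threshold $k$ defined by the inequality in (ii) is exactly the one for which $b_k\le c\le b_{k+1}$. Once the equivalence between the selection rule and the KKT feasibility $b_k\le c\le b_{k+1}$ is established, the formulas for $x_i^{*}$ in the lemma drop out immediately.
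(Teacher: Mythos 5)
Your proposal is correct and follows essentially the same route as the paper: form the Lagrangean for the strictly concave program, use complementary slackness to force the two-level structure (your $x_i^{\ast}=\min(c,b_i)$ is the paper's $x_i=b_i$ for $i\leqslant k$, $x_i=\mu$ otherwise), translate $b_k\leqslant c<b_{k+1}$ into the selection rule of step (ii), and invoke strict concavity for uniqueness. The only cosmetic difference is direction of argument --- you derive the water-filling form as a necessary condition and then match it to the stated $k$, whereas the paper constructs the candidate from the stated $k$ and verifies the sufficient conditions of its Corollary \ref{cor:H}, relegating the existence of $k$ to Proposition \ref{prop:k}.
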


The starting point for this result is noting that if the $b_i$ are in
increasing order, there is a unique $\ell$ s.t. $b_1 + \cdots + b_{\ell} < a
\leqslant b_1 + \cdots + b_{\ell + 1}$. If so, a plausible high-entropy
solution is to set the first $\ell$ of the $x_i$ (constrained to be smallest)
equal to their upper bounds, and split the remainder of $a$, which does not
exceed $b_{\ell + 1}$, equally among the rest of the $x_i$, which are the more
loosely constrained. Lemma \ref{le:ub} refines this idea: to actually achieve
maximum entropy, only the first $k < \ell$ of the $x_i$ can be set to their
upper bounds.

The significance of $k$ is as follows. Suppose $b_1 > a / n$; then $k = 0$,
and $b_2, \ldots, b_n$ are also $> a / n$. This means that the bounds on the
$x_i$ are loose enough to allow {\tmem{complete symmetry/uniformity}}: the
{\maxent} solution is $x^{\ast}_1 = \cdots = x^{\ast}_n = a / n$. Now suppose
that $b_1 \leqslant a / n$ and $b_1 + (n - 1) b_2 > a$, in which case $k = 1$.
Then the bound $b_1$ is restrictive enough to break the symmetry: the solution
is $x_1 = b_1, x_2 = \cdots = x_n = (a - b_1) / (n - 1)$, symmetric apart from
$x_1$. So, in general, $k$ measures how many of the constraints on the
individual $x_i$ are {\tmem{informative}}, i.e. force the solution away from
the total uniformity that would have obtained if only the constraint $x_1 +
\cdots + x_n = a$ had been present. Finally, $k = n$ iff $b_1 + \cdots + b_n =
a$. In that extreme, the solution is determined completely by the upper
bounds: $x^{\ast} = (b_1, \ldots, b_n)$.

\subsection{Back to the matrix}

\label{sec:back}Returning to the solution (\ref{eq:dtb}), we proceed along the
lines of the proof of Lemma \ref{le:ub} in the Appendix. We treat the $u_i$ as
the $b_i$ of the lemma: arrange the rows of $X$ so that $u_1 \leqslant u_2
\leqslant \cdots \leqslant u_n$, and find the largest $k$ s.t.
\begin{equation}
  \label{eq:k} u_1 + \cdots + u_k + (n - k) u_k \leqslant s.
\end{equation}
It may be that $k = 0$, i.e. $u_1 > s / n$, but $k$ cannot exceed $n$. As
pointed out above, the number $k$ measures how many of the row constraints are
informative. Now consider the solution
\begin{equation}
  \label{eq:sol} \sum_j x_{1 j} = u_1, \ldots, \sum_j x_{k j} = u_k,
  \hspace{1em} \lambda_{k + 1} = \cdots = \lambda_n = 1.
\end{equation}
From (\ref{eq:dtb}), this implies that for all $j$, $x_{k + 1, j} = \cdots =
x_{n j} = \mu$. Since the sum of all $x_{i j}$ must be $s$,
\[ u_1 + \cdots + u_k + (n - k) m \mu = s, \hspace{2em} \tmop{so} \hspace{1em}
   \mu = \frac{s - (u_1 + \cdots + u_k)}{m (n - k)} . \]
Using this in (\ref{eq:sol}),
\[ \lambda_i = \frac{(n - k) u_i}{s - (u_1 + \cdots + u_k)} \hspace{2em} i =
   1, \ldots, k, \]
and we must verify that $\lambda_i \leqslant 1$. But this holds if $s > u_1 +
\cdots + u_k + (n - k) u_i$, which is true for any $i$ because of
(\ref{eq:k}).

In summary, with the rows of $X$ arranged so that $u_1 \leqslant u_2 \leqslant
\cdots \leqslant u_n$, the solution is
\begin{equation}
  \hat{x}_{i j} = \left\{ \begin{array}{ll}
    \displaystyle
    \frac{u_i}{m}, & i \leqslant k,\\
    \displaystyle
    \frac{s - (u_1 + \cdots + u_k)}{m (n - k)}, & i > k,
  \end{array} \right. \hspace{1em} j = 1, \ldots, m
  \label{eq:soltb}
\end{equation}
where $k$ is determined by (\ref{eq:k}). The non-informative $u_{k + 1},
\ldots, u_n$ do not appear.

According to (\ref{eq:soltb}), $\hat{X}$ consists of $k$ identical columns
with the structure specified in the 1st line, followed by $m - k$ identical
columns with the structure specified in the 2nd line. Within each set, the
columns are identical because we do not have any information that imposes a
distinction. We also note that if $s = \sum_i u_i$, then $k = n - 1$, and we
obtain the solution (\ref{eq:bo2}), as expected, since this value of $s$
imposes no additional constraint. If $s / n \leqslant \min_i u_i$, the matrix
is totally uniform: $\hat{x}_{i j} = s / n^2$.

Finally, the solution (\ref{eq:soltb}) translates immediately to the case
where we have bounds on the columns, instead of the rows of the matrix.

\begin{example}
  \label{ex:bounds1}We re-do Example \ref{ex:ub}, adding information on the
  total sum $s$. Here $_{} \sum_i u_i = 308$. We see from the last column of
  the table that $\#( \hat{X} \mid s)$ increases with $s$, as intuitively
  expected.
  
  \begin{table}[h]
      \begin{tabular}{|c|c|cccccccccc|c|} \hline
        $s$ & $k$ & $\hat{x}_{1 \cdot}$ &  &  &  &  & $\hat{x}_{5 \cdot}$ &  &
      &  & $\hat{x}_{10 \cdot}$ & $\log_{10} \#(\hat{X}|s)$ \\ \hline
        308 & 10 & 20 & 20 & 24 & 30 & 30 & 36 & 36 & 36 & 36 & 40 & 549.2\\
        \cline{12-12}

        \multicolumn{1}{|c|}{307} & \multicolumn{1}{c|}{9} &
        \multicolumn{1}{c}{20} & \multicolumn{1}{c}{20} &
        \multicolumn{1}{c}{24} & \multicolumn{1}{c}{30} &
        \multicolumn{1}{c}{30} & \multicolumn{1}{c}{36} &
        \multicolumn{1}{c}{36} & \multicolumn{1}{c}{36} &
        \multicolumn{1}{c}{36} & \multicolumn{1}{|c}{39} &
        \multicolumn{1}{|c|}{547.3} \\

        \multicolumn{1}{|c|}{304} & \multicolumn{1}{c|}{9} &
        \multicolumn{1}{c}{20} & \multicolumn{1}{c}{20} &
        \multicolumn{1}{c}{24} & \multicolumn{1}{c}{30} &
        \multicolumn{1}{c}{30} & \multicolumn{1}{c}{36} &
        \multicolumn{1}{c}{36} & \multicolumn{1}{c}{36} &
        \multicolumn{1}{c}{36} & \multicolumn{1}{|c}{36} &
        \multicolumn{1}{|c|}{541.8} \\  \cline{8-11}

        \multicolumn{1}{|c|}{303} & \multicolumn{1}{c|}{9} &
        \multicolumn{1}{c}{20} & \multicolumn{1}{c}{20} &
        \multicolumn{1}{c}{24} & \multicolumn{1}{c}{30} &
        \multicolumn{1}{c}{30} & \multicolumn{1}{|c}{35.8} &
        \multicolumn{1}{c}{35.8} & \multicolumn{1}{c}{35.8} &
        \multicolumn{1}{c}{35.8} & \multicolumn{1}{c}{35.8} &
        \multicolumn{1}{|c|}{539.9} \\

        \multicolumn{1}{|c|}{275} & \multicolumn{1}{c|}{5} &
        \multicolumn{1}{c}{20} & \multicolumn{1}{c}{20} &
        \multicolumn{1}{c}{24} & \multicolumn{1}{c}{30} &
        \multicolumn{1}{c}{30} & \multicolumn{1}{|c}{30.2} &
        \multicolumn{1}{c}{30.2} & \multicolumn{1}{c}{30.2} &
        \multicolumn{1}{c}{30.2} & \multicolumn{1}{c}{30.2} &
        \multicolumn{1}{|c|}{487.2} \\

        \multicolumn{1}{|c|}{274} & \multicolumn{1}{c|}{5} &
        \multicolumn{1}{c}{20} & \multicolumn{1}{c}{20} &
        \multicolumn{1}{c}{24} & \multicolumn{1}{c}{30} &
        \multicolumn{1}{c}{30} & \multicolumn{1}{|c}{30} &
        \multicolumn{1}{c}{30} & \multicolumn{1}{c}{30} &
        \multicolumn{1}{c}{30} & \multicolumn{1}{c}{30} &
        \multicolumn{1}{|c|}{485.3} \\ \cline{6-7}

        \multicolumn{1}{|c|}{273} & \multicolumn{1}{c|}{3} &
        \multicolumn{1}{c}{20} & \multicolumn{1}{c}{20} &
        \multicolumn{1}{c|}{24} & \multicolumn{1}{c}{29.86} &
        \multicolumn{1}{c}{29.86} & \multicolumn{1}{c}{29.86} &
        \multicolumn{1}{c}{29.86} & \multicolumn{1}{c}{29.86} &
        \multicolumn{1}{c}{29.86} & \multicolumn{1}{c}{29.86} &
        \multicolumn{1}{|c|}{483.4} \\

        \multicolumn{1}{|c|}{272} & \multicolumn{1}{c|}{3} &
        \multicolumn{1}{c}{20} & \multicolumn{1}{c}{20} &
        \multicolumn{1}{c|}{24} & \multicolumn{1}{c}{29.71} &
        \multicolumn{1}{c}{29.71} & \multicolumn{1}{c}{29.71} &
        \multicolumn{1}{c}{29.71} & \multicolumn{1}{c}{29.86} &
        \multicolumn{1}{c}{29.71} & \multicolumn{1}{c}{29.71} &
        \multicolumn{1}{|c|}{481.5} \\ \hline
    \end{tabular}
    \caption{Row sums of the most likely $10 \times 10$ matrix $\hat{X}$ as a
    function of $s$. Within a row, all elements are equal. The stepwise line
    inside the table indicates the $k$-boundary. The last column of the table
    is computed by (\ref{eq:Nd}).}
  \end{table}
\end{example}

\subsection{Bounds on total sum and on row sums}

\label{sec:b1}Suppose that instead of knowing the total sum as above, we have
only an upper bound $u$ on it:
\begin{equation}
  \sum_{i, j} x_{i j} \leqslant u, \hspace{2em} \tmop{and} \hspace{1em}
  \forall i, \hspace{1em} \sum_j x_{i j} \leqslant u_i .
\end{equation}
What has already been said in this section suffices to solve this problem
also. First, if $u > \sum_i u_i$, then this constraint is immaterial and we
have the problem of {\S}\ref{sec:bounds2}, whose solution is given by
(\ref{eq:bo2}). So we are left with the case $u \leqslant \sum_i u_i$. Suppose
that we pick a value $s < u$ for the total demand, and then find $\hat{X}$ as
in {\S}\ref{sec:back}. Example \ref{ex:bounds1} showed that $\#( \hat{X} \mid
s)$ increases as $s$ increases, suggesting that we should reduce to the
problem (\ref{eq:bounds1}) with $s = u$. Indeed, Lemma \ref{le:ub2}
establishes this formally.

This is the first case where ``most likely'' is not equivalent to ``having
maximum entropy''. However, we see that there is still a strong and simple
connection: the most likely matrix is the {\maxent} matrix with the largest
total sum allowed by the constraints.

\section{Bounds on row and column sums}

\label{sec:boundsrc}Here we consider the situation where our information $I$
consists just of upper bounds on both the row and column sums of the matrix:
\[ \sum_j x_{i j} \leqslant u_i, \hspace{1em} \sum_i x_{i j} \leqslant v_j,
   \hspace{2em} i, j = 1, \ldots, n. \]
The number of realizations of a matrix subject to this information is given by
expression (\ref{eq:Nd}), except in this case the total sum $s$ is not known
and has to be substituted by $\sum_{i, j} x_{i j}$. If we use the first two
terms of (\ref{eq:stirling}) to approximate the log of
\[ \#(X|I) \; = \; \left(\begin{array}{c}
     x_{11} + \cdots + x_{n m}\\
     x_{11}, \ldots, x_{1 m}, x_{21}, \ldots, x_{2 m}, \ldots, x_{n 1},
     \ldots, x_{n m}
   \end{array}\right), \]
we find that it is given by the ``entropy difference'' function
\begin{equation}
  \begin{array}{lll}
    G (X) & = & \Bigl( \sum_{i, j} x_{i j} \Bigr) \ln \Bigl( \sum_{i, j} x_{i
    j} \Bigr) - \sum_{i, j} x_{i j} - \sum_{i, j} (x_{i j} \ln x_{i j} - x_{i
    j})\\
    & = & \Bigl( \sum_{i, j} x_{i j} \Bigr) \ln \Bigl( \sum_{i, j} x_{i j}
    \Bigr) - \sum_{i, j} x_{i j} \ln x_{i j} .
  \end{array} \label{eq:G}
\end{equation}
(When $I$ includes the value of $\sum_{i, j} x_{i j}$, maximizing $G (X)$
subject to $I$ is equivalent to maximizing $H (X)$ subject to $I$.)
Proposition \ref{prop:G} in the Appendix shows that $G (X)$ is concave over
the domain $x_{i j} > 0$. And by Corollary \ref{cor:G}, the elements of
$\hat{X}$ have the form
\begin{equation}
  \hat{x}_{i j} \; = \; \Bigl( \sum_{k, l} \hat{x}_{k l} \Bigr) \lambda_i
  \mu_j, \hspace{1em} \lambda_i, \mu_j \in (0, 1] . \label{eq:xrcb}
\end{equation}
Given the above, we note that there are two cases to consider w.r.t. to the
bounds:
\begin{enumerate}
  \item All rows sum to their bounds, and all columns sum to their bounds.
  
  \item At least one row or one column sums to less than its bound.
\end{enumerate}
Case 1 is possible only when $\sum_i u_i = \sum_j v_j$. If so, the solution
s.t. $\forall i, \sum_j x_{i j} = u_i$ and $\forall j, \sum_i x_{i j} = v_j$
has been discussed in {\S}\ref{sec:rcs}. Thus we need only consider case 2. We
can establish the following property of $\hat{X}$:

\begin{proposition}
  \label{prop:rcb}The matrix $\hat{X}$ is s.t. for any $i, j$ pair, either row
  $i$ sums to $u_i$, or column $j$ sums to $v_j$. That is, there can be no
  pair $i, j$ s.t. row $i$ sums to $< u_i$ and column $j$ sums to $< v_j$.
\end{proposition}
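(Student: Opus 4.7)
My plan is to argue by contradiction using a local perturbation of $\hat{X}$. Suppose there were indices $i, j$ with $\sum_k \hat{x}_{ik} < u_i$ and $\sum_k \hat{x}_{kj} < v_j$. Set
\[
\delta \;=\; \min\!\Bigl( u_i - \sum_k \hat{x}_{ik},\; v_j - \sum_k \hat{x}_{kj} \Bigr) \;>\; 0,
\]
and for $0 < \epsilon \leq \delta$ define $X'$ by $x'_{ij} = \hat{x}_{ij} + \epsilon$ and $x'_{kl} = \hat{x}_{kl}$ otherwise. Then row $i$ of $X'$ still sums to at most $u_i$, column $j$ still sums to at most $v_j$, and all other row and column sums are unchanged, so $X'$ satisfies $I$. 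Thus it suffices to exhibit a choice of $\epsilon > 0$ for which $G(X') > G(\hat{X})$, contradicting the maximality of $\hat{X}$.

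For this I would just compute the directional derivative of $G$ along $E_{ij}$. Writing $S = \sum_{k,l} \hat{x}_{kl}$, differentiation of (\ref{eq:G}) gives
\[
\frac{\partial G}{\partial x_{ij}}\bigg|_{\hat{X}} \;=\; \bigl(\ln S + 1\bigr) - \bigl(\ln \hat{x}_{ij} + 1\bigr) \;=\; \ln \frac{S}{\hat{x}_{ij}}.
\]
So I need to verify $\hat{x}_{ij} < S$ strictly. By Corollary~\ref{cor:G}, $\hat{x}_{ij} = S \lambda_i \mu_j$ with $\lambda_i, \mu_j \in (0,1]$, and summing (\ref{eq:xrcb}) over all $k,l$ forces $(\sum_k \lambda_k)(\sum_l \mu_l) = 1$; since the problem is nontrivial (more than one positive entry is present whenever the bounds admit such a solution, which they do here because $u_i$ and $v_j$ are both positive and not all mass can be concentrated at $(i,j)$ without violating some other column bound), we have $\lambda_i \mu_j < 1$, hence $\hat{x}_{ij} < S$ and the partial derivative is strictly positive.

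Combining, $G(X') - G(\hat{X}) = \epsilon \ln(S/\hat{x}_{ij}) + O(\epsilon^2) > 0$ for all sufficiently small $\epsilon \in (0, \delta]$. This contradicts the optimality of $\hat{X}$, proving the proposition.

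The only delicate step is justifying $\hat{x}_{ij} < S$; in degenerate edge cases (all mass concentrated in a single cell) one has to check the argument separately, but those cases are incompatible with the hypothesis that both row $i$ and column $j$ have slack while the cell at their intersection exhausts the total. Everything else is routine.
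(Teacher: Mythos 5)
Your main argument is correct, but it follows a different route from the paper's. The paper gives two proofs: (i) an integer counting argument --- add $1$ to $x_{ij}$, which stays feasible because both constraints have slack, and observe $\#(X')/\#(X)=(s+1)/(x_{ij}+1)>1$; and (ii) a continuous argument via complementary slackness rather than perturbation: if row $i$ and column $j$ both have slack then (\ref{eq:multprime}) forces $\lambda_i=\mu_j=1$, so $\hat{x}_{ij}=\sum_{k,l}\hat{x}_{kl}$, every other entry must vanish, and that degenerate matrix is then dismissed as plainly suboptimal. Your perturbation is essentially the continuous analogue of (i): you show that increasing $x_{ij}$ is a feasible direction with strictly positive directional derivative of $G$, which requires $\hat{x}_{ij}<S$; this you correctly extract from the product form $\hat{x}_{ij}=S\lambda_i\mu_j$ together with $\bigl(\sum_k\lambda_k\bigr)\bigl(\sum_l\mu_l\bigr)=1$ and the strict positivity of all multipliers, giving $\lambda_i\mu_j<1$ whenever $nm>1$. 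Your route avoids the complementary-slackness step entirely; the paper's route (ii) gets the contradiction directly from the KKT structure with no calculus.

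One inaccuracy in your closing remark: the all-mass-in-one-cell configuration is \emph{not} incompatible with both row $i$ and column $j$ having slack. If $\hat{x}_{ij}=S<\min(u_i,v_j)$ and all other entries are zero, then both constraints are slack and the cell at their intersection exhausts the total --- this is precisely the degenerate configuration the paper's proof (ii) arrives at and must rule out by a separate observation. Your parenthetical claim that concentrating all mass at $(i,j)$ would ``violate some other column bound'' is also wrong: the other columns would sum to $0$, which satisfies their bounds. Fortunately your main argument does not need either remark: under the standing form (\ref{eq:xrcb}) all entries of $\hat{X}$ are strictly positive, so the degenerate case cannot arise for $nm>1$, and the $1\times 1$ case is vacuous for the proposition.
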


By virtue of Proposition \ref{prop:rcb}, if \tmtextit{one} column of $\hat{X}$
sums to less than its bound, then \tmtextit{all} rows must sum to their
bounds. The situation is symmetric w.r.t. rows and columns, so we will analyze
just the column case, where one or more columns sum to less than their bounds.

So suppose that columns $1, \ldots, k$ sum to their bounds, while columns $k +
1, \ldots, m$ sum to less than their bounds, with $0 \leqslant k < m$. Then we
must have $v = \sum_j v_j > \sum_i u_i = u$. By Corollary \ref{cor:G}, $\mu_{k
+ 1} = \cdots = \mu_m = 1$ in (\ref{eq:xrcb}). Also, as pointed out above, all
rows must sum to their bounds, which implies that $\sum_{k, l} x_{k l} = u$.

If we consider the columns, (\ref{eq:xrcb}) says that $x_{i j} = u \lambda_i
\mu_j$ for $j \leqslant k$, and $x_{i j} = u \lambda_i$ for $j > k$. Adding
these by sides over $i$ we obtain
\begin{equation}
  v_j = u \lambda \mu_j, \hspace{1em} j \leqslant k \hspace{2em} \text{and}
  \hspace{2em} v_{k + 1} > u \lambda, \ldots, v_m > u \lambda, \label{eq:rcb1}
\end{equation}
where $\lambda$ is the sum of the $\lambda_i$. Further, if we add all the
columns, $v_1 + \cdots + v_k + u \lambda + \cdots + u \lambda = u$, whence
\begin{equation}
  \lambda = \frac{u - (v_1 + \cdots + v_k)}{(m - k) u} . \label{eq:rcb2}
\end{equation}
($\lambda = 1 / m$ if $k = 0$, i.e. if all columns sum to less than their
bounds.) Turning to the rows, we have $u \lambda_1 \mu = u_1, \ldots, u
\lambda_n \mu = u_n$, where $\mu$ is the sum of the $\mu_j$. Thus
\begin{equation}
  \lambda_i \mu = \frac{u_i}{u} = r_i, \hspace{2em} \text{and} \hspace{2em}
  \lambda \mu = 1. \label{eq:rcb3}
\end{equation}
We can now determine all the $\lambda_i$ and $\mu_j$: from (\ref{eq:rcb3}) and
(\ref{eq:rcb1}),
\begin{equation}
  \lambda_i = \lambda r_i \hspace{1em} \text{and} \hspace{1em} \mu_j = \left\{
  \begin{array}{ll}
    (1 / \lambda) v_j / u, & j \leqslant k.\\
    1, & j > k.
  \end{array} \right. \label{eq:rcb4}
\end{equation}
Note that neither $\lambda_i$ nor $\mu_j$ depend on $v_{k + 1}, \ldots, v_m$.
This means that we can take these bounds to be as large as we please, e.g.
$\infty$, thus handling the case where no upper bound is specified for some of
these columns.

It remains to take care of the fact that (\ref{eq:xrcb}) requires $\lambda_i,
\mu_j \leqslant 1$. It is easy to verify this for $\lambda_i$: it is the
product of two factors, both $< 1$. The condition $\mu_j \leqslant 1$ is
equivalent to $(m - k) v_j \leqslant u - (v_1 + \cdots + v_k)$ for $j
\leqslant k$. The inequalities in (\ref{eq:rcb1}) impose another condition on
$k$: $(m - k) \min (v_{k + 1}, \ldots, v_m) > u - (v_1 + \cdots + v_k)$.
Taking these two conditions together we see that $k$ and the column bounds
$v_j$ must satisfy
\[ \max (v_1, \ldots, v_k) \; \leqslant \; \frac{u - (v_1 + \cdots + v_k)}{m -
   k} \; < \; \min (v_{k + 1}, \ldots, v_m), \]
where $0 \leqslant k < m$. Assume that the $v_j$ are in increasing
order{\footnote{If the matrix is a contingency table, simply rearrange the
columns. If it refers to $n$ nodes, re-label the nodes.}}. Then this condition
becomes
\begin{equation}
  \; v_k \leqslant \; \frac{u - (v_1 + \cdots + v_k)}{m - k} \; < \; v_{k +
  1}, \hspace{2em} 0 \leqslant k < m. \label{eq:rcb5}
\end{equation}
The following result establishes the existence of a $k$ satisfying
(\ref{eq:rcb5}):

\begin{proposition}
  \label{prop:rcb2}Let $v_1 \leqslant v_2 \leqslant \cdots \leqslant v_m$, $u
  < v$, and $v_1 \leqslant u / m$. Then there is a unique $k \in \{1, \ldots,
  m - 1\}$ s.t.
  \[ (m - k) v_k \; \leqslant \; u - (v_1 + \cdots + v_k) \; < \; (m - k) v_{k
     + 1} . \]
  If $v_1 > u / m$, then $k = 0.$
\end{proposition}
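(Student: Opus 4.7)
The plan is to reformulate the two-sided inequality in the proposition as a single condition involving a monotone quantity, and then read off existence and uniqueness from monotonicity together with the boundary values.

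First I would set $S_k = v_1 + \cdots + v_k$ (with $S_0 = 0$) and define, for $k = 1, \ldots, m$,
\[
F(k) \;=\; (m-k)\,v_k + S_k .
\]
The left inequality of the proposition, $(m-k)v_k \leq u - S_k$, is literally $F(k) \leq u$. For the right inequality, $(m-k)v_{k+1} > u - S_k$, I would note that
\[
(m-k)v_{k+1} + S_k \;=\; (m-k-1)v_{k+1} + S_{k+1} \;=\; F(k+1),
\]
so the right inequality is $F(k+1) > u$. Thus the two bounds in the statement are jointly equivalent to the single condition $F(k) \leq u < F(k+1)$.

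Next I would verify that $F$ is non-decreasing by a direct one-line computation:
\[
F(k+1) - F(k) \;=\; (m-k-1)v_{k+1} + v_{k+1} - (m-k)v_k \;=\; (m-k)(v_{k+1} - v_k) \;\geq\; 0,
\]
using the ordering $v_1 \leq \cdots \leq v_m$. Then I would evaluate $F$ at the endpoints: $F(1) = m v_1$ and $F(m) = S_m = v$. Under the assumption $v_1 \leq u/m$ we get $F(1) \leq u$, and under the assumption $u < v$ we get $F(m) > u$. Since $F$ is a non-decreasing function on $\{1,\ldots,m\}$ which starts at a value $\leq u$ and ends strictly above $u$, the set $\{k : F(k) \leq u\}$ is a nonempty initial segment that does not contain $m$; its maximum $k$ lies in $\{1,\ldots,m-1\}$ and satisfies $F(k) \leq u < F(k+1)$. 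Uniqueness is immediate: any other $k'$ with $F(k') \leq u < F(k'+1)$ would have to coincide with this maximum (a larger $k'$ would give $F(k') > u$, a smaller one would give $F(k'+1) \leq u$).

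Finally, for the case $v_1 > u/m$, I would observe that $F(1) = m v_1 > u$, so no $k \geq 1$ can satisfy the left inequality; one must take $k = 0$, for which the left inequality is vacuous and the right inequality reduces to $m v_1 > u$, which is exactly the hypothesis. There is no real obstacle beyond spotting the bookkeeping identity $F(k+1) = (m-k)v_{k+1} + S_k$ that collapses the two-sided inequality into a single monotone condition; once that is in hand, the rest is just monotonicity and endpoint checks.
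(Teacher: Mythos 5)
Your proof is correct and is essentially the paper's own argument in disguise: the paper works with the auxiliary function $\varphi(\ell) = u - (v_1 + \cdots + v_\ell) - (m-\ell)v_{\ell+1}$, which satisfies $\varphi(\ell) = u - F(\ell+1)$, and likewise proves the claim by monotonicity plus the two endpoint evaluations before taking the threshold index. If anything your version is slightly tidier: it states the direction of monotonicity correctly (the paper asserts $\varphi$ is increasing, though it is in fact non-increasing, i.e.\ $F$ is non-decreasing) and it spells out the uniqueness argument explicitly rather than leaving it implicit in the choice of the least such index.
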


Finally, if $u < v$, by (\ref{eq:xrcb}), (\ref{eq:rcb2}), and (\ref{eq:rcb4}),
the elements of $\hat{X}$ are
\begin{equation}
  \hat{x}_{i j} \; = \; \left\{ \begin{array}{ll}
    \displaystyle
    \frac{u_i v_j}{u}, & j \leqslant k,\\
    \displaystyle
    \frac{u - (v_1 + \cdots + v_k)}{m - k}  \frac{u_i}{u}, & j > k,
  \end{array} \right. 
  \label{eq:rcbfinal} \hspace{1em} i = 1, \ldots, n
\end{equation}
where $k$ is given by Proposition \ref{prop:rcb2}. We note that $k$ is the
number of \tmtextit{informative} column constraints, in the sense that the
solution depends on $v_1, \ldots, v_k$ but not on $v_{k + 1}, \ldots$ (similar
to the $k$ in Lemma \ref{le:ub}). In fact, some of $v_{k + 1}, \ldots, v_m$
may be infinite, i.e. there may be no upper bounds on some of columns $k + 1,
\ldots, m$.

The reader may want to compare (\ref{eq:rcbfinal}) with the result
(\ref{eq:gm}) for the model of {\S}\ref{sec:rcs}. The comparison shows that
even though for the problem we just solved ``most likely'' is not directly
equivalent to ``having maximum entropy'', there is still a straightforward
connection as we also saw in {\S}\ref{sec:b1}.

\section{Bounds on individual elements}

\label{sec:bindiv}We first point out that whereas bounds on individual matrix
elements provide the utmost flexibility in expressing constraints, they can
have unintended consequences. Then we look at the most likely matrix subject
to bounds on the row sums and on individual elements.

\subsection{Expressive power and consistency}

\subsubsection{Expressive power}

Consider finding the most likely matrix $\hat{X}$ subject just to the
constraints
\[ \forall i, j \hspace{1em} x_{i j} \leqslant w_{i j}, \]
where $W$ is a given $n \times m$ matrix in $\mathbb{N}$. Then it is easy to
see by an argument similar to that of {\S}\ref{sec:bounds2} that $\hat{X}$ has
elements $\hat{x}_{i j} = w_{i j}$. Thus the information $W$ suffices to
specify {\tmem{any}} possible most likely matrix. Conversely, to be able to
specify an arbitrary matrix, information on every matrix element is necessary;
$W$ is one form of such information.

\subsubsection{\label{sec:cons}Consistency}

Imposing $w$-constraints that are satisfied with equality requires that the
$w$- and $u$-constraints together satisfy certain conditions if the matrix
$\hat{X}$ is not to exhibit surprising behavior. For example, suppose we are
trying to determine a $3 \times 3$ matrix with row/column sums $u_1, u_2, u_3$
and s.t. $x_{11} = 0$, as shown in Fig. \ref{fig:w}, left. Then we must have
$x_{12} + x_{13} = u_1$ from row 1, and $(x_{22} + x_{23} + x_{32} + x_{33}) +
x_{12} + x_{13} = u_2 + u_3$ from columns 2 and 3. It follows that if $u_1$ is
not strictly less than $u_2 + u_3$, then $x_{22} + x_{23} + x_{32} + x_{33} =
0$, which means that all these elements are 0. So with certain $u_1, u_2,
u_3$, $x_{11} = 0$ may force other elements of the matrix to be 0 as well.
This does not happen without the requirement $x_{11} = 0$: we know from
{\S}\ref{sec:rcs} that for any $u_1, u_2, u_3$ there is a $\hat{X}$ with all
elements non-zero.

\begin{figure}[h]
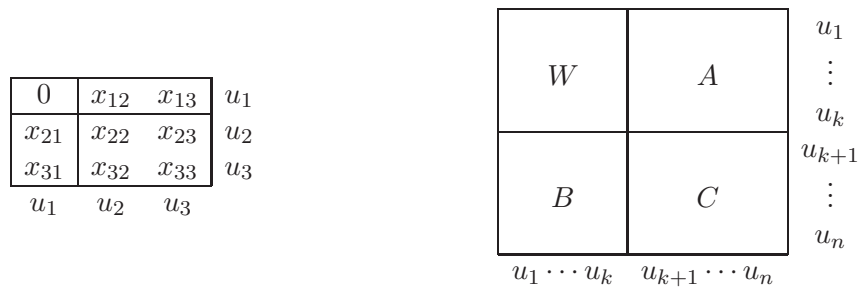

  \[ \begin{array}{|c|cc|c} \cline{1-3}
       0 & x_{12} & x_{13} & u_1\\ \cline{1-3}
       x_{21} & x_{22} & x_{23} & u_2\\
       x_{31} & x_{32} & x_{33} & u_3\\ \cline{1-3}
       \multicolumn{1}{c}{u_1} & \multicolumn{1}{c}{u_2} &
       \multicolumn{1}{c}{u_3} & 
     \end{array} \hspace{8em}
     \begin{array}{|c|c|c} \cline{1-2}
       &  & u_1 \\
       W & A & \vdots \\
       &  & u_k \\ \cline{1-2}
       &  & u_{k + 1}\\
       B & C & \vdots\\
       &  & u_n \\ \cline{1-2}
       \multicolumn{1}{c}{u_1 \cdots u_k} &
       \multicolumn{1}{c}{u_{k+1} \cdots u_n} & 
     \end{array}
   \]
  \caption{\label{fig:w}Matrices with some elements fixed by $w$-constraints.}
\end{figure}

More generally, suppose we have a constraint that forces a certain $k \times
k$ square submatrix of $X$ to equal a matrix $W$. In the simplest case let $W$
be in the upper left-hand corner of $X$ as shown in Fig. \ref{fig:w}, right.
Then we have
\[ \Sigma_W + \Sigma_A = u_1 + \cdots + u_k, \hspace{2em} \Sigma_A + \Sigma_C
   = u_{k + 1} + \cdots + u_n, \]
assuming that $\Sigma_W < u_1 + \cdots + u_k$. It can be seen that unless $u$
and $W$ are s.t. $u_1 + \cdots + u_k - \Sigma_W < u_{k + 1} + \cdots + u_n$,
we must have $\Sigma_C = 0$, which would force the entire submatrix $C$ to be
0.

If $W$ is an arbitrary submatrix, let its rows and columns correspond to a set
$I$ of indices. Then the condition that must be satisfied so that $C$ is not
forced to 0 can be written as
\begin{equation}
  u_I < \frac{u}{2} + \frac{w_{I I}}{2}, \label{eq:cons}
\end{equation}
where the subscripts indicate summation over the set. In terms of a traffic
matrix, this condition says that the traffic originating in the set $I$ must
be less than half of the total, plus half of the traffic originating in $I$
and terminating in $I$. As an example, suppose we require that there is no
traffic among the locations in $I$; then (\ref{eq:cons}) says that the traffic
that leaves $I$ cannot be more than half of the total traffic.

Related to the above, there is also a necessary and sufficient condition for
the existence of a non-negative matrix with specified row and column sums and
an arbitrary subset of elements specified to be 0: see Theorems 3.10 and 3.12
in Ch. 4 of {\cite{mat}}; see also {\S}3.6 of {\cite{grav}}.

\subsection{Bounds on row sums and on individual elements}

\label{sec:bounds3}Suppose we know the same bounds on the row sums as in 3,
but, in addition, we have a bound on the size of each individual element:
\begin{equation}
  \forall i \hspace{1em} \sum_j x_{i j} \leqslant u_i, \hspace{2em} \tmop{and}
  \hspace{2em} \forall i, j \hspace{1em} x_{i j} \leqslant w_{i j} .
  \label{eq:brsie}
\end{equation}
This problem is easy to solve because the constraints (\ref{eq:brsie}) are
separable, so each row of the most likely matrix $\hat{X}$ can be found
{\tmem{independently}} of all the other rows. Fixing a particular row $i$,
denote the $x_{i j}$ by $x_1, \ldots, x_m$, $u_i$ by $a$, and the $w_{i j}$ by
$b_1, \ldots, b_m$. Then we have the problem of finding the most likely vector
$x^{\ast}$ that satisfies
\begin{equation}
  x_1 + \cdots + x_m \leqslant a, \hspace{2em} x_1 \leqslant b_1, \ldots, x_m
  \leqslant b_m, \hspace{2em} a, b_i \in \mathbb{N}.
\end{equation}
The solution to this problem is as follows. If $a > b_1 + \cdots + b_m$,
$x^{\ast}$ is simply $(b_1, \ldots, b_m)$. If $a \leqslant b_1 + \cdots +
b_m$, then $x^{\ast}$ is found by replacing the inequality with an equality
and reducing to the problem solved in \ref{sec:bounds1}. The formal details
are given in Lemma \ref{le:ub2} in the Appendix.

\section{Symmetric information}

\label{sec:symm}We now investigate some types of constraints that we have not
looked at so far, but under the additional assumption that these constraints
or information are {\tmem{symmetric}} w.r.t. rows and columns. By necessity,
the matrices are $n \times n$ square. Here is one motivation for considering
symmetry. Suppose we are designing a ``backbone'' type, i.e. high capacity and
geographically extensive, communications network connecting a set of $n$
locations. The {\tmem{transmission facilitie}}s in such a network typically
have the same capacity in each direction. To understand what capacities are
needed for the links, we need an estimate of the traffic matrix. Given the
symmetry of the capacities, we may assume, for example, that the total
incoming and outgoing traffic for a given node are equal. The same
considerations apply to a network of roads connecting a set of cities. Thus
the symmetry of capacities allows us to act as if the traffic matrix were
symmetric.

These considerations aside, the symmetric information allows us to go farther
toward analytical solutions than would be possible otherwise. One of the
questions we investigate via the analytical forms is the effect of fixing some
elements on the ``product of independent factors'' structure of the {\maxent}
matrix.

\subsection{Total sum and bounds on row and column sums}

Here the sum of row $i$ is bounded by $u_i$, and so is the sum of column $i$.
By Corollary \ref{cor:H}, the matrix elements are of the form
\[ x_{i j} = \lambda_i \mu_j \nu, \hspace{2em} \lambda_i, \mu_j \in (0, 1], \]
where $\lambda_i, \mu_i$ correspond to the row and column constraints
respectively, and $\nu$ to the constraint on the total sum. We will show that
the solution is essentially the same as that obtained in {\S}\ref{sec:bounds1}
for the non-symmetric, row-only case. So define $k$ by (\ref{eq:k}), and
consider the solution
\begin{enumerate}
  \item Constraints $1, \ldots, k$ are satisfied as equalities for both rows
  and columns (so we must have $\lambda_1, \ldots, \lambda_k \leqslant 1$ and
  $\mu_1, \ldots, \mu_k \leqslant 1$), and
  
  \item $\lambda_{k + 1} = \cdots = \lambda_n = 1$, and $\mu_{k + 1} = \cdots
  = \mu_n = 1$.
\end{enumerate}
It follows that the matrix must look like
\[ \left[ \begin{array}{ll}
     {}[\lambda_i \mu_j \nu]_{k \times k} & [\lambda_i \nu]_{k \times n - k}\\
     {}[\mu_j \nu]_{n - k \times k} & [\nu]_{n - k \times n - k}
   \end{array} \right] \; = \; \left[ \begin{array}{ll}
     A & B\\
     C & D
   \end{array} \right] . \]
Note that rows $k + 1, \ldots, n$ are identical, and so are columns $k + 1,
\ldots, n$. Let $\Sigma_A, \ldots, \Sigma_D$ denote the sums of the elements
of the submatrices. Clearly,
\[ \Sigma_A + \Sigma_B = u_1 + \cdots + u_k, \hspace{1em} \Sigma_A + \Sigma_C
   = u_1 + \cdots + u_k, \hspace{1em} \Sigma_A + \Sigma_B + \Sigma_C +
   \Sigma_D = s. \]
Therefore $\Sigma_B = \Sigma_C$ and $u_1 + \cdots + u_k + \Sigma_B + \Sigma_D
= s$. Substituting for the elements of $B$ and $D$, we find that
\[ \nu = \frac{s - (u_1 + \cdots + u_k)}{(n - k) (\lambda_1 + \cdots +
   \lambda_k + n - k)} . \]
And from $\Sigma_B = \Sigma_C$ it follows that $\lambda_1 + \cdots + \lambda_k
= \mu_1 + \cdots + \mu_k$. Now the constraint on row $i < k$ is $\lambda_i
(\lambda_1 + \cdots + \lambda_k + n - k) \nu = u_i$. Using the expression for
$\nu$ in this we find that $\lambda_i = (n - k) u_i / (s - (u_1 + \cdots +
u_k))$, and this is $< 1$ as required. Similarly, from the constraint for
column $j < k$ we find that $\mu_j = (n - k) u_j / (s - (u_1 + \cdots +
u_k))$. We see that $\mu_j = \lambda_j$. Therefore we need only deal with the
$\lambda_j$. Substituting the values of the $\lambda_i$ in the expression for
$\nu$, we finally arrive at the solution
\begin{equation}
  \hat{x}_{i j} \; = \; \left\{ \begin{array}{ll}
    \displaystyle
    \frac{u_i u_j}{s}, & (i, j) \in A,\\
    \displaystyle
    \frac{(s - (u_1 + \cdots + u_k)) u_i}{(n - k) s}, & (i, j) \in B,\\
    \displaystyle
    \frac{(s - (u_1 + \cdots + u_k)) u_j}{(n - k) s}, & (i, j) \in C,\\
    \displaystyle
    \frac{(s - (u_1 + \cdots + u_k))^2}{(n - k)^2 s}, & (i, j) \in D,
  \end{array} \right.
\end{equation}
where $k$ is defined by (\ref{eq:k}). This solution is symmetric, and the
reader can verify that it satisfies all the constraints. The $A, B, C$
matrices have the gravity form, the $B$ and $C$ matrices are the transpose of
one another, and the $D$ matrix is constant. Finally, note that we did not
assume the symmetry in the solution, it followed as a consequence of the
symmetric information.

\subsection{Given row and column sums, partially fixed diagonal}

\label{sec:fixd}Assume that the sum of row and column $i$ is $u_i$, and that
the first $m \leqslant n$ of the diagonal elements are fixed to be 0. Then,
with $s = u_1 + \cdots + u_n$ still denoting the total sum, the matrix
elements other than the first $m$ on the diagonal must be given by
\[ \hat{x}_{i j} = s \lambda_i \mu_j, \hspace{1em} \lambda_i, \mu_j > 0. \]
Including the factor $s$ is a convenience, as will become clear. For the above
solution to be possible the consistency condition (\ref{eq:cons}) must be
satisfied: each $u_i$ must be strictly less than half of $s$. We shall assume
this to be the case. If $\lambda$ is the sum of the $\lambda_i$ and $\mu$ that
of the $\mu_j$, and $r_i = u_i / s$, the row and column constraints can be
written as
\begin{equation}
  \begin{array}{llll}
    \lambda_i (\mu - \mu_i) = r_i, & i \leqslant m, \hspace{1em} & \lambda_i
    \mu = r_i, & i > m,\\
    \mu_j (\lambda - \lambda_j) = r_j, & j \leqslant m, \hspace{1em} & \mu_j
    \lambda = r_j, & j > m.
  \end{array} \label{eq:lsys1}
\end{equation}
Here $r_1 + \cdots + r_n = 1$ and $r_i < 1 / 2$ for all $i$. Noting that
(\ref{eq:lsys1}) is unchanged if we exchange the $\lambda_i$ and the $\mu_j$
leads us to consider a solution with $\mu_i = \lambda_i$. Then
(\ref{eq:lsys1}) reduces to
\begin{equation}
  \lambda_i (\lambda - \lambda_i) = r_i, \hspace{1em} i \leqslant m,
  \hspace{2em} \lambda_i \lambda = r_i, \hspace{1em} i > m. \label{eq:lsys2}
\end{equation}
(\ref{eq:lsys2}) implies that for $i \leqslant m$ we have $\lambda_i =
(\lambda \pm \sqrt{\lambda^2 - 4 r_i}) / 2$, whereas for $i > m$, $\lambda_i =
r_i / \lambda$. Suppose we pick the root with the ``$-$'' for $i = 1, \ldots,
m$. Adding the expressions for the $\lambda_i$ by sides and dividing both
sides of the result by $\lambda \neq 0$ we see that $\lambda$ must satisfy the
equation
\begin{equation}
  \label{eq:lambda} \sqrt{1 - 4 r_1 / \lambda^2} + \cdots + \sqrt{1 - 4 r_m /
  \lambda^2} \; - \; 2 \frac{r_{m + 1} + \cdots + r_n}{\lambda^2} \; = \; m -
  2.
\end{equation}
An exact analytical solution of (\ref{eq:lambda}) is impractical, but we can
find an approximation. To begin with, we observe that the l.h.s. of
(\ref{eq:lambda}) is a monotone increasing function of $\lambda$ so the root
of (\ref{eq:lambda}) is unique{\footnote{This is also true because the
solution to our strictly concave maximization problem is unique.}}. Second, at
the expense of restricting the $r_i$ somewhat, we can localize the root:

\begin{proposition}
  \label{prop:eqlambda}Suppose that each of $r_1, \ldots, r_n$ is in $(0, 1 /
  3)$, and $r_1 + \cdots + r_n = 1$. Then for any $n \geqslant 3$ and any $m
  \leqslant n$, equation (\ref{eq:lambda}) has a root in the interval $(2
  \sqrt{r_{\max}}, 4 / 3)$, where $r_{\max}$ is the largest of the $r_i$.
\end{proposition}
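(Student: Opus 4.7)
Let $f(\lambda) = \sum_{i=1}^m \sqrt{1 - 4 r_i/\lambda^2} - 2R_2/\lambda^2 - (m-2)$, with $R_2 = r_{m+1}+\cdots+r_n$, viewed on its natural domain $[2\sqrt{r_{\max}},\infty)$, where I read $r_{\max}$ as $\max_{i\le m} r_i$ (the threshold below which the square roots become imaginary). A root of (\ref{eq:lambda}) is a zero of $f$, so my plan is to apply the intermediate value theorem after verifying that $f$ is continuous and strictly increasing on its domain, $f(2\sqrt{r_{\max}}) < 0$, and $f(4/3) > 0$. Continuity and strict monotonicity are immediate from termwise differentiation (this also reproduces the uniqueness noted in the footnote), so the real content is the two endpoint sign estimates.

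For the right endpoint, substituting $4/\lambda^2 = 9/4$ puts each square-root argument $(9/4)r_i$ in $(0,3/4)$ by hypothesis. Concavity of $\sqrt{1-x}$ on $[0,3/4]$ places its graph above the chord joining $(0,1)$ to $(3/4,1/2)$, namely $1-(2/3)x$, hence $\sqrt{1-(9/4)r_i} \ge 1-(3/2)r_i$. Summing and using $R_1+R_2=1$ with $R_1=r_1+\cdots+r_m$ collapses $f(4/3)$ to $7/8 - (3/8)R_1$, which is at least $1/2 > 0$.

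For the left endpoint, each square-root argument becomes $r_i/r_{\max} \in (0,1]$, and my plan is to use the tangent bound $\sqrt{1-x}\le 1-x/2$. Applied uniformly this yields only $f(2\sqrt{r_{\max}}) \le 2 - 1/(2 r_{\max})$, which sits just below $1/2$ under $r_{\max}<1/3$ but is not obviously negative. The essential sharpening: for the $p\ge 1$ indices with $r_i=r_{\max}$ (at least one exists by the definition of $r_{\max}$) the square-root term is identically $0$, whereas the tangent evaluates to $1/2$, so one may replace the tangent bound by $0$ on these terms, saving an extra $p/2$. This produces $f(2\sqrt{r_{\max}}) < 2 - p/2 - 3/2 = (1-p)/2 \le 0$, with strict inequality because $r_{\max}<1/3$ is strict.

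With both endpoint inequalities in hand the intermediate value theorem furnishes the desired root, and strict monotonicity makes it unique. The only nontrivial step is the refined tangent estimate at $\lambda = 2\sqrt{r_{\max}}$: the naive linearization is off by exactly $1/2$ for each index with $r_i=r_{\max}$, and tracking that correction is what forces $f$ to be negative there.
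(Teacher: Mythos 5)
Your proof is correct, and while it shares the paper's overall skeleton (monotonicity of $f$ plus the intermediate value theorem, reduced to the two sign checks $f(4/3)>0$ and $f(2\sqrt{r_{\max}})\leqslant 0$), the way you establish the two endpoint estimates is genuinely different. The paper treats each endpoint expression as a concave function of $(r_1,\ldots,r_n)$ on the slice $\{\sum_i r_i=1,\ 0<r_i\leqslant 1/3\}$ and locates its extremum: at $\lambda=4/3$ it invokes ``a concave function attains its minimum on the boundary'' and runs a case analysis over which three coordinates equal $1/3$ (separately for $m=0,1,2$ and $m\geqslant 3$); at $\lambda=2\sqrt{r_{\max}}$ it solves a constrained maximization with $r_1$ held fixed, again splitting $m<n$ from $m=n$. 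You replace all of this with two pointwise scalar inequalities for $\sqrt{1-x}$ --- the chord bound $\sqrt{1-x}\geqslant 1-\tfrac{2}{3}x$ on $[0,3/4]$ for the upper endpoint, and the tangent bound $\sqrt{1-x}\leqslant 1-x/2$ for the lower one --- which collapse the sums via $\sum_i r_i=1$ with no case analysis; your observation that the naive tangent bound overshoots by exactly $1/2$ for each index attaining $r_{\max}$, whose true contribution is $0$, is precisely the sharpening needed and is arguably cleaner than the paper's optimization argument. Two small points. First, your reading of $r_{\max}$ as $\max_{i\leqslant m}r_i$ is the defensible one: the statement says ``the largest of the $r_i$,'' but the paper's own proof silently assumes the maximum sits among the first $m$ indices (it relabels so that $r_{\max}=r_1$), and the claim can actually fail under the all-indices reading when the overall maximum occurs only at an index $>m$. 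Second, your argument, like the paper's, needs $m\geqslant 1$ so that $p\geqslant 1$ exists; the degenerate case $m=0$ is outside the scope of {\S}\ref{sec:fixd} anyway. A bonus of your version is that it delivers the strict inequality $f(2\sqrt{r_{\max}})<0$, which is what one actually needs to place the root in the \emph{open} interval claimed by the proposition.
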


To see the necessity for some additional restriction on the $r_i$, suppose
that $m = n$ and that we extend $(0, 1 / 3)$ to $(0, 1 / 2)$. Then consider
the set $r_1 = \frac{1}{2}$ and $r_2 = \cdots = r_n = \frac{1}{2 (n - 1)}$; it
can be seen that (\ref{eq:lambda}) has no solution in $( \sqrt{2}, \infty)$.

In terms of the root $\lambda$ of (\ref{eq:lambda}) the final solution is
\begin{equation}
  \hat{x}_{i j} \; = \left\{
  \begin{array}{ll}
    \displaystyle
    \frac{s \lambda^2}{4}  \Bigl( 1 - \sqrt{1 - 4 r_i / \lambda^2} \Bigr) 
    \Bigl( 1 - \sqrt{1 - 4 r_j / \lambda^2}  \Bigr), & i, j \leqslant m
    \tmop{and} i \neq j,\\[1.5ex]
    \displaystyle
    \frac{s r_i}{2}  \Bigl( 1 - \sqrt{1 - 4 r_j / \lambda^2}  \Bigr), & i > m,
    j \leqslant m,\\[1.5ex]
    \displaystyle
    \frac{s r_j}{2}  \Bigl( 1 - \sqrt{1 - 4 r_i / \lambda^2}  \Bigr), & i
    \leqslant m, j > m,\\
    \displaystyle
    s r_i r_j / \lambda^2, & i, j > m.
  \end{array}
  \right. \label{eq:zdiag}
\end{equation}
We see that the $\hat{x}_{i j}$ for $i, j \leqslant m$ have a product form,
but, in general, the factors are \tmtextit{not} independent. We know from
{\S}\ref{sec:rcs} that irrespective of symmetry, the dependence disappears if
we don't fix any diagonal elements. Fixing these elements imposes a global
dependence as we saw in {\S}\ref{sec:bindiv}.

\begin{example}
  \label{ex:72}We consider the two extreme cases $m = n$ and $m = 1$. In
  addition, suppose that all the $u_i$ are equal.
  
  First let $m = n$. Then all $r_i$ are $1 / n$ and $\lambda = \sqrt{n / (n -
  1)}$. From the first line of (\ref{eq:zdiag}) the matrix $\hat{X}$ has the
  form
  \[ \frac{s}{n (n - 1)}  \left(\begin{array}{ccccc}
       0 & 1 & 1 & \ldots & 1\\
       1 & 0 & 1 & \ldots & 1\\
       \vdots & \vdots & \vdots & \vdots & \vdots\\
       1 & 1 & 1 & \ldots & 0
     \end{array}\right) . \]
  Compare this with the case where the diagonal is not fixed to 0, and the
  solution is $\hat{x}_{i j} = s / n^2$.
  
  Now let $m = 1$. This is the simplest possible case: we have a square matrix
  with all row and column sums equal, except that the single element $d_{11}$
  is fixed to be 0. From (\ref{eq:lambda}) we find $\lambda = (n - 1) /
  \sqrt{n (n - 2)}$. From the last three lines of (\ref{eq:zdiag}) we see that
  now $\hat{X}$ is
  \[ \frac{s}{n (n - 1)}  \left(\begin{array}{ccccc}
       0 & 1 & 1 & \ldots & 1\\
       1 & \frac{n - 2}{n - 1} & \frac{n - 2}{n - 1} & \ldots & \frac{n - 2}{n
       - 1}\\
       \vdots & \vdots & \vdots & \ldots & \vdots\\
       1 & \frac{n - 2}{n - 1} & \frac{n - 2}{n - 1} & \ldots & \frac{n - 2}{n
       - 1}
     \end{array}\right) . \]
\end{example}

\begin{example}
  \label{ex:73}Now consider the case $m = n$, but with the $r_i$ arbitrary. We
  obtain an analytical aproximation to the solution. If we let $\xi = 4 /
  \lambda^2$, (\ref{eq:lambda}) becomes
  \begin{equation}
    \sqrt{1 - r_1 \xi} + \cdots + \sqrt{1 - r_n \xi} = \; n - 2, \hspace{2em}
    \xi \in (9 / 4, 1 / r_{\max}), \label{eq:xi}
  \end{equation}
  where the lower bound on $\xi$ comes from Proposition \ref{prop:eqlambda}.
  This equation has the form $f (\xi) = c$, and if $\xi_0$ is an approximation
  to its solution, the reversion technique in Ch. 1 of {\cite{Hen}} can be
  used to find the following power series for $\xi$: with $\rho_i = \sqrt{1 -
  r_i \xi_0}$ and $\delta = f (\xi_0) - c = \rho_1 + \cdots + \rho_n - n + 2$,
  \begin{equation}
    \xi \; = \; \xi_0 + \frac{2}{r_1 / \rho_1 + \cdots + r_n / \rho_n} \delta
    - \frac{r_1^2 / \rho_1^3 + \cdots + r_n^2 / \rho_n^3}{(r_1 / \rho_1 +
    \cdots + r_n / \rho_n)^3} \delta^2 - \ldots \label{eq:ps}
  \end{equation}
  It is known that this series converges, and it can be shown that $\delta <
  1$ for any $\xi_0 \in (9 / 4, 1 / r_{\max})${\footnote{Note that $\rho_i < 1
  - 1 / 2 r_i \xi_0$, so $\rho_1 + \cdots + \rho_n < n - \xi_0 / 2$.}}. By
  (\ref{eq:zdiag}) the non-diagonal matrix elements are given by
  $\frac{s}{\xi} (1 - \sqrt{1 - r_i \xi}) (1 - \sqrt{1 - r_j \xi})$, and
  (\ref{eq:ps}) lets us find power series expansions for them in terms of
  $\delta$. We do not show these series here, but the expansions to first
  order result in manageable expressions. The accuracy of the expansions
  remains to be investigated.
  
  Now consider a numerical example with $u = (40, 20, 30, 40)$. We have $r =
  \bigl( \frac{4}{13}, \frac{2}{13}, \frac{3}{13}, \frac{4}{13} \bigr)$, so
  solving (\ref{eq:xi}) we find $\xi \approx 2.88018 \in \bigl( \frac{9}{4},
  \frac{13}{4} \bigr)$. If we take $\xi_0 = 9 / 4$, (\ref{eq:ps}) gives $\xi
  \approx 2.25 + 0.749023 - 0.112889 = 2.8861$. Then the form $\frac{s}{\xi}
  (1 - \sqrt{1 - r_i \xi}) (1 - \sqrt{1 - r_j \xi})$ yields
  \[ \hat{X} \; = \; \left( \begin{array}{cccc}
       0 & 7.59 & 12.59 & 19.82\\
       7.59 & 0 & 4.82 & 7.59\\
       12.59 & 4.82 & 0 & 12.59\\
       19.82 & 7.59 & 12.59 & 0
     \end{array} \right), \hspace{1em} \text{vs.} \hspace{1em}
     \left(\begin{array}{cccc}
       12.31 & 6.15 & 9.23 & 12.31\\
       6.15 & 3.08 & 4.62 & 6.15\\
       9.23 & 4.62 & 6.92 & 9.23\\
       12.31 & 6.15 & 9.23 & 12.31
     \end{array}\right), \]
  the {\maxent} matrix without the 0-diagonal constraint, whose elements are
  simply $s r_i r_j$. As we also saw in Example \ref{ex:72}, the result of
  fixing the diagonal to 0 cannot be regarded as a (small) perturbation of the
  $s r_i r_j$ form.
\end{example}

\paragraph{Generalization}

(a) The solution (\ref{eq:zdiag}) is valid also when the $u_i$ are upper
bounds on the row sums, instead of specifying their values. In that case
Corollary \ref{cor:H} requires that $\lambda_i \leqslant 1$, which is true if
\[ \forall i \hspace{1em} 2 \sqrt{r_i} \leqslant \lambda \leqslant 2
   \hspace{2em} \text{or} \hspace{2em} \lambda > 2. \]
But this holds by virtue of Proposition \ref{prop:eqlambda}.

(b) The diagonal elements can be set to arbitrary values $w_{11}, \ldots, w_{n
n}$, if the $r_i$ are re-defined as $(u_i - w_{i i}) / s$. This actually
requires a slight extension of Proposition \ref{prop:eqlambda}; see
Proposition \ref{prop:eqlambda2} below. And it can be verified that if we set
$w_{i i} = u_i^2 / s$ we get the expected solution $\hat{x}_{i j} = s r_i
r_j$.

\subsection{3-dimensional matrices with fixed diagonal}

The development of {\S}\ref{sec:fixd} can be extended to 3-dimensional
matrices. These can be thought of as contingency tables involving elements
with 3 attributes, or as trip matrices where a trip is characterized by an
origin and a destination as in the 2-dimensional case, and, in addition, by a
class of vehicle, say, or as traffic matrices where traffic flows have
origins, destinations, and a size class, such as ``small'', ``medium'',
``large''. Whatever the three attributes, we will index them by $i, j, k$. We
will consider the case where the whole diagonal is 0 and the available
information is the sums over all $(i, k)$ sections and all $(j, k)$ sections
of the matrix:
\[ \forall i \hspace{1em} \sum_{j \neq i} x_{i j k} = u_{i k}, \hspace{2em}
   \forall j \hspace{1em} \sum_{i \neq j} x_{i j k} = v_{j k} . \]
In the case of a traffic matrix for example, this means that we know the total
number of flows originating at $i$ and of size class $k$, and the total number
ending at $j$ of size class $k$. The matrix elements will then be
\[ x_{i j k} = s \lambda_{i k} \mu_{j k} \hspace{1em} \text{for } i \neq j,
   \hspace{1em} \text{and $0$ otherwise}, \]
where the $\lambda_{i k}$ and $\mu_{j k}$ are s.t.
\[ \forall i \hspace{1em} s \sum_{j \neq i} \lambda_{i k} \mu_{j k} = u_{i k},
   \hspace{2em} \forall j \hspace{1em} s \sum_{i \neq j} \lambda_{i k} \mu_{j
   k} = v_{j k} . \]
Now let this information be symmetric w.r.t $i$ and $j$, i.e. $v_{i k} = u_{i
k}$. Further, define $r_{i k} = u_{i k} / s$. Then the above constraints can
be written as
\[ \forall i \hspace{1em} \lambda_{i k} (\mu_{. k} - \mu_{i k}) = r_{i k},
   \hspace{2em} \forall j \hspace{1em} \mu_{j k} (\lambda_{. k} - \lambda_{j
   k}) = r_{j k}, \]
where the dot indicates summation over the corresponding index. Since the
index $j$ in the second set of constraints could have equally well been
written $i$, we are led to consider a solution with $\mu_{i k} = \lambda_{i
k}$ and the single set of constraints
\[ \forall i \hspace{1em} \lambda_{i k} (\lambda_{. k} - \lambda_{i k}) = r_{i
   k} . \]
Proceeding as we did after (\ref{eq:lsys2}), $\lambda_{i k} = \bigl(
\lambda_{. k} - \sqrt{\lambda_{. k}^2 - 4 r_{i k}} \bigr) / 2$. Adding these
over $i$ and setting $\xi_k = 4 / \lambda_{. k}^2$, we arrive at a
generalization of (\ref{eq:xi}):
\[ \forall k \hspace{1em} \sqrt{1 - r_{1 k} \xi_k} + \cdots + \sqrt{1 - r_{n
   k} \xi_k} \; = \; n - 2. \]
This is completely analogous to what we found in Example \ref{ex:73}, except
that here we have one equation for each of the $\xi_k$. The final expression
for the elements of the matrix is
\[ x_{i j k} \; = \; \frac{4}{\xi_k}  \bigl( 1 - \sqrt{1 - r_{i k} \xi_k}
   \bigr)  \bigl( 1 - \sqrt{1 - r_{j k} \xi_k} \bigr) \hspace{1em} \text{for }
   i \neq j, \hspace{1em} \text{and $0$ otherwise} . \]
Note that the matrix sections corresponding to different values of $k$ are
independent of one another. The above development generalizes to the case
where only the first $m < n$ of the diagonal elements are fixed, and in the
other ways discussed in {\S}\ref{sec:fixd}.

\subsection{Given row and column sums, fixed diagonal blocks}

We generalize the development of {\S}\ref{sec:fixd} to equality constraints
expressed by a block-diagonal matrix $W$ with blocks $W_1, \ldots, W_m$, $m
\geqslant 3$. This means that the $n$ nodes are partitioned into $m$ sets
$I_1, \ldots, I_m$, and the submatrix of $X$ that has rows and columns in
$I_j$ is constrained to equal $W_j$. So $X$ looks like
\[ \text{\begin{tabular}{l|llll|}
     \multicolumn{1}{c}{} & \multicolumn{1}{c}{$I_1$} & 
     \multicolumn{1}{c}{$I_2$} & \multicolumn{1}{c}{$\ldots$} & 
     \multicolumn{1}{c}{$I_m$} \\ \cline{2-5}
     $I_1$ & $W_1$ &  &  & \\
     $I_2$ &  & $W_2$ &  & \\
     $\vdots$ &  &  & $\ddots$ & \\
     $I_m$ &  &  &  & $W_m$ \\ \cline{2-5}
   \end{tabular}} \]
where the rest of the entries are determined by the $u$-constraints and, as
previously, are given by $s \lambda_i \lambda_j$. Thus for the nodes in the
set $I_1$ we have the equations
\begin{eqnarray*}
  s \lambda_1 ( \text{sum of } \lambda_j, j \notin I_1) & = & u_1 - (
  \text{sum of first row of } W_1),\\
  s \lambda_2 ( \text{sum of } \lambda_j, j \notin I_1) & = & u_2 - (
  \text{sum of second row of } W_1),
\end{eqnarray*}
etc. Let $\lambda_{I_1}$ denote $\sum_{i \in I_1} \lambda_i$, and similarly
for $\lambda_{I_2}$, etc. Also let $\lambda = \lambda_{I_1} + \cdots +
\lambda_{I_m}$. Then the above equations can be written as
\[ s \lambda_1 (\lambda - \lambda_{I_1}) = u_1 - w_{1 I_1}, \hspace{1em} s
   \lambda_2 (\lambda - \lambda_{I_2}) = u_2 - w_{2 I_1}, \hspace{1em} \ldots
\]
where the meaning of the additional notation should be clear. If we now add
these equations by sides, the result can be written compactly as
\[ \lambda_{I_1} (\lambda - \lambda_{I_1}) = r_{I_1}, \hspace{2em}
   \tmop{where} \hspace{1em} r_{I_1} = (u_{I_1} - w_{I_1 I_1}) / s, \]
and where subscripts that are sets indicate summation over the respective
sets. If we do the same thing for the rows in $I_2, \ldots, I_m$, we arrive at
the system of equations
\[ \lambda_{I_1} (\lambda - \lambda_{I_1}) = r_{I_1}, \hspace{1em}
   \lambda_{I_2} (\lambda - \lambda_{I_2}) = r_{I_2}, \hspace{1em} \ldots,
   \hspace{1em} \lambda_{I_m} (\lambda - \lambda_{I_m}) = r_{I_m}, \]
which has exactly the form (\ref{eq:lsys2}) except that here the $r_{I_i}$
don't sum to 1, but to
\[ \sigma \; = \; 1 - \frac{1}{s}  \sum_{i = 1}^m w_{I_i I_i}  \; < \; 1. \]
Of course, the $u_{I_i}$ and $w_{I_i I_i}$ are assumed to satisfy the
consistency condition (\ref{eq:cons}). Proceeding just as in
{\S}\ref{sec:fixd}, we have
\[ \lambda_{I_j} = \frac{1}{2}  \Bigl( \lambda - \sqrt{\lambda^2 - 4 r_{I_j}} 
   \Bigr) \]
so that $\lambda$ is the root of the equation
\begin{equation}
  \label{eq:lambda2} \sqrt{1 - 4 r_{I_1} / \lambda^2} + \cdots + \sqrt{1 - 4
  r_{I_m} / \lambda^2} \; = \; m - 2,
\end{equation}
about which we have a generalization of Proposition \ref{prop:eqlambda}:

\begin{proposition}
  \label{prop:eqlambda2}Suppose that $r_{I_1} + \cdots + r_{I_m} = \sigma <
  1$, and each $r_{I_j}$ is in $(0, \sigma / 3)$. Then for $m \geqslant 3$
  equation (\ref{eq:lambda2}) has a root in $(2 \sqrt{r_{\max}}, 4
  \sqrt{\sigma} / 3)$, where $r_{\max}$ is the largest of the $r_{I_j}$.
\end{proposition}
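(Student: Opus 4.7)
The plan is to follow the template of the proof of Proposition~\ref{prop:eqlambda}, applying the intermediate value theorem to the left-hand side of (\ref{eq:lambda2}), which I denote $F(\lambda)$. Each summand is strictly increasing in $\lambda$ wherever its radicand is positive, so $F$ is continuous and strictly monotone on $(2\sqrt{r_{\max}}, \infty)$. The interval $(2\sqrt{r_{\max}}, 4\sqrt{\sigma}/3)$ is nonempty because $r_{\max}/\sigma < 1/3 < 4/9$ forces $2\sqrt{r_{\max}} < 4\sqrt{\sigma}/3$. It therefore suffices to check the two endpoint estimates $F(2\sqrt{r_{\max}}) < m-2$ and $F(4\sqrt{\sigma}/3) > m-2$; monotonicity then supplies both existence and uniqueness of the root, matching the uniqueness already forced by strict concavity of the underlying maximum-entropy problem.

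For the lower endpoint, at $\lambda = 2\sqrt{r_{\max}}$ the summand indexed by the argmax of $r_{I_j}$ vanishes and each of the remaining $m-1$ summands equals $\sqrt{1 - r_{I_j}/r_{\max}}$. I would apply the tangent bound $\sqrt{1-x} \leq 1 - x/2$ term by term and sum, obtaining $F(2\sqrt{r_{\max}}) \leq (m-1) - (\sigma - r_{\max})/(2 r_{\max})$. The hypothesis $r_{\max} < \sigma/3$ is precisely what is needed to make the fraction strictly exceed $1$, delivering the required strict inequality. This is directly parallel to the corresponding step in the proof of Proposition~\ref{prop:eqlambda}.

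At the upper endpoint each radicand equals $1 - 9 r_{I_j}/(4 \sigma)$ and lies above $1/4$, again using $r_{I_j} < \sigma/3$. I would bound the deficit $m - F(4\sqrt{\sigma}/3) = \sum_j \bigl(1 - \sqrt{1 - 9 r_{I_j}/(4\sigma)} \bigr)$ via the rationalization $1 - \sqrt{1-x} = x/(1 + \sqrt{1-x})$; because $\sqrt{1-x} > 1/2$ on the relevant range, the denominator is at least $3/2$, so each summand is at most $(2/3) \cdot 9 r_{I_j}/(4\sigma)$. Summing and using $\sum_j r_{I_j} = \sigma$ collapses to $3/2$, which is strictly less than the $2$ required.

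This upper-endpoint bound is the main obstacle in the argument: the naive estimate $\sqrt{1-x} \geq 1-x$ yields only a deficit bound of $9/4$ and fails by $1/4$, so one must genuinely exploit the hypothesis $r_{I_j} < \sigma/3$ to push $1 + \sqrt{1-x}$ up from a trivial lower bound of $1$ to the useful $3/2$ in the rationalized identity. The constants $\sigma/3$ in the hypothesis and $4\sqrt{\sigma}/3$ in the interval are calibrated exactly so that both endpoint estimates succeed with matching clean margins, exactly as in the special case $\sigma = 1$ handled by Proposition~\ref{prop:eqlambda}.
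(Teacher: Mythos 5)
Your proof is correct, but it takes a genuinely different route from the paper's. The paper proves Proposition~\ref{prop:eqlambda2} by rescaling, setting $\rho_j = r_{I_j}/\sigma$ so that $\sum_j \rho_j = 1$ and each $\rho_j \in (0,1/3)$, which reduces both endpoint checks to the inequalities (\ref{eq:fl}) and (\ref{eq:o/o/o}) already established in the proof of Proposition~\ref{prop:eqlambda}; those in turn rest on a concavity argument (the left-hand sides are concave in the $r_i$, so their extrema over the simplex occur on its boundary, which is then enumerated case by case). You instead give direct, self-contained endpoint estimates: at $\lambda = 2\sqrt{r_{\max}}$ the tangent bound $\sqrt{1-x} \leqslant 1 - x/2$ yields $F \leqslant (m-1) - (\sigma - r_{\max})/(2r_{\max}) < m-2$ exactly when $r_{\max} < \sigma/3$, and at $\lambda = 4\sqrt{\sigma}/3$ the rationalization $1-\sqrt{1-x} = x/(1+\sqrt{1-x})$ combined with $\sqrt{1-x} > 1/2$ bounds the total deficit by $3/2 < 2$. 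Both arguments are valid and both identify $\sigma/3$ as the critical threshold; yours buys independence from Proposition~\ref{prop:eqlambda} and avoids the boundary-case enumeration entirely (and in fact, specialized to $\sigma = 1$ and $m = n$, it gives a cleaner proof of that earlier result too), while the paper's buys brevity by reusing machinery already in place. Your remark that the naive bound $\sqrt{1-x} \geqslant 1-x$ misses by $1/4$ at the upper endpoint correctly identifies where the hypothesis $r_{I_j} < \sigma/3$ must genuinely be used.
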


Given the root $\lambda$ of (\ref{eq:lambda2}), if $i \in I_k$, $\lambda_i$ is
given by $2 r_i / (\lambda + \sqrt{\lambda^2 - 4 r_{I_k}})$. But this
expression also equals $r_i (\lambda - \sqrt{\lambda^2 - 4 r_{I_k}}) / (2
r_{I_k})$. So the solution to our problem is: for $i \in I_k, j \in I_{\ell}$,
$k \neq \ell$,
\begin{equation}
  \begin{array}{c}
    \displaystyle
    \hat{x}_{i j} \; = \frac{s}{4}  \frac{\lambda^2 r_i r_j}{r_{I_k}
    r_{I_{\ell}}}  \bigl( 1 - \sqrt{1 - 4 r_{I_k} / \lambda^2}  \bigr) \bigl(
    1 - \sqrt{1 - 4 r_{I_{\ell}} / \lambda^2}  \bigr),\\[2ex]
    \displaystyle
    r_i = \frac{u_i - w_{i I_k}}{s}, \hspace{1em} r_{I_k} = \sum_{i \in I_k}
    r_i \; = \; \frac{u_{I_k} - w_{I_k I_k}}{s} .
  \end{array} \label{eq:bdiag}
\end{equation}
Suppose that all blocks are of size 1, so $m = n$ and the constraints are
$x_{i i} = w_{i i}$. Then it is easily seen that (\ref{eq:bdiag}) gives the
same result as (\ref{eq:zdiag}). An analytical approximation to the solution
of (\ref{eq:lambda2}), and to the matrix elements themselves, can be found by
the power series (\ref{eq:ps}).

Finally, the solution (\ref{eq:bdiag}) holds even when the $u_i$ are upper
bounds on the row and column sums. In that case Corollary \ref{cor:H} requires
$\lambda_{I_j} \leqslant 1$, which holds if $\forall j, 2 \sqrt{r_{I_j} }
\leqslant \lambda \leqslant 2$. But this last condition obtains by virtue of
Proposition \ref{prop:eqlambda2}.

\section{Conclusion}

\label{sec:concl}Table \ref{tab:summary} summarizes the problems for which we
obtained results in this paper. We saw that the most likely/{\maxent} matrices
exhibit as much independence, symmetry, and uniformity as possible subject to
the available information or constraints. Further, they are robust with
respect to changes in the information/constraints. Lastly, given independent
constraints on the rows and columns, the matrix elements have a ``product of
independent factors'' form, unless some of them are fixed, in which case the
independence disappears.

\begin{table}[h]
  \centering
  \begin{tabular}{|l|} \hline
    {\tmem{Rectangular matrices/contingency tables}}\\ \hline
    Given row sums and some column sums\\
    Bounds on row sums\\
    Total sum and bounds on row sums\\
    Bounds on total sum and row sums\\
    Bounds on row and column sums\\
    Bounds on row sums and on individual elements\\ \hline
    {\tmem{Square matrices with symmetric information}}\\ \hline
    Total sum and bounds on row and column sums\\
    Given row sums and partially-fixed diagonal,\\
    {\hspace{1em}}with extension to 3d matrices\\
    Given row sums and fixed diagonal blocks \\ \hline
  \end{tabular}
  \caption{\label{tab:summary}Summary of cases solved.}
\end{table}

The types of constraints that we considered were relatively simple, as befits
an initial exploration of the space of analytical solutions. The aim was to
have enough basic results to establish a framework for further investigations,
perhaps motivated by constraints arising in concrete problems.

Finally, even though we used the discrete balls--and--boxes framework
throughout, all that is said in this paper applies also to deriving
2-dimensional {\tmem{discrete probability distributions}} from incomplete
information, if we think of the balls as ``probability quanta'' thrown into
the boxes. Jaynes {\cite{JL}} calls this the ``Wallis derivation'' of
{\maxent} probability distributions.

\paragraph{Acknowledgments}Thanks to my colleagues Howard Karloff and N.J.A.
Sloane for interesting and helpful discussions.

\appendix\section{Auxiliary results and Proofs}

\subsection{Optimal solutions of concave programs}

\label{sec:concave}We review some standard terminology and results.

Suppose $C$ is a convex set in $\mathbb{R}^n$. A {\tmem{concave program}} is
the problem of maximizing a concave function $f$ on this set, subject to a
number of equality and inequality constraints:
\begin{equation}
  \label{eq:cp} \begin{array}{c}
    \max_{x \in C} f (x) \hspace{2em} \tmop{subject} \tmop{to}\\
    g_i (x) = 0, \hspace{1em} i = 1, \ldots, \ell, \hspace{2em} h_j (x)
    \leqslant 0, \hspace{1em} j = 1, \ldots, m,
  \end{array}
\end{equation}
where the $g_i$ are \tmtextit{linear} on $C$ (and assumed linearly
independent) and the $h_j$ are \tmtextit{convex} on $C$. All $x \in C$
satisfying the constraints are called \tmtextit{feasible}. The
{\tmem{Lagrangean function}} associated with the concave program (\ref{eq:cp})
is
\begin{equation}
  \label{eq:lagr} \Phi (x, \alpha, \beta) = f (x) - \sum_i \alpha_i g_i (x) -
  \sum_j \beta_j h_j (x) .
\end{equation}
The following result (Theorem 2.30 in {\cite{ADSZ}}, or {\S}5.5.3 of
{\cite{BV}}) gives sufficient conditions for solving a concave program in
which all functions are differentiable on (the interior of) $C$:

\begin{theorem}
  \label{th:cp}If $x^{\ast}$ is feasible, and there are $\alpha^{\ast},
  \beta^{\ast}$ such that
  \[ \nabla_x \Phi (x^{\ast}, \alpha^{\ast}, \beta^{\ast}) = 0, \hspace{2em}
     \beta^{\ast}_j h_j (x^{\ast}) = 0 \hspace{1em} \tmop{and} \hspace{1em}
     \beta^{\ast}_j \geqslant 0 \hspace{1em} \forall j, \]
  then $x^{\ast}$ solves the concave program (\ref{eq:cp}).
\end{theorem}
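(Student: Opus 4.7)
The plan is to prove this by a direct chain of inequalities using concavity of $f$, convexity of the $h_j$, linearity of the $g_i$, and the three stated conditions on $(x^{\ast},\alpha^{\ast},\beta^{\ast})$. The goal is to show that for any feasible $x \in C$, we have $f(x) \leq f(x^{\ast})$, which proves optimality since $x^{\ast}$ is itself feasible.

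First, I would take an arbitrary feasible $x$ (so $g_i(x)=0$ for every $i$ and $h_j(x) \leq 0$ for every $j$) and invoke concavity of $f$ to write $f(x) - f(x^{\ast}) \leq \nabla f(x^{\ast})^T (x - x^{\ast})$. Then I use the stationarity condition $\nabla_x \Phi(x^{\ast},\alpha^{\ast},\beta^{\ast})=0$ to replace $\nabla f(x^{\ast})$ by $\sum_i \alpha_i^{\ast}\nabla g_i(x^{\ast}) + \sum_j \beta_j^{\ast}\nabla h_j(x^{\ast})$, so that
\[
f(x) - f(x^{\ast}) \;\leq\; \sum_i \alpha_i^{\ast}\,\nabla g_i(x^{\ast})^T(x-x^{\ast}) + \sum_j \beta_j^{\ast}\,\nabla h_j(x^{\ast})^T(x-x^{\ast}).
\]

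Next I would handle the two sums separately. Because each $g_i$ is linear on $C$, we have $\nabla g_i(x^{\ast})^T(x-x^{\ast}) = g_i(x) - g_i(x^{\ast}) = 0 - 0 = 0$, so the first sum vanishes. For the second sum, convexity of each $h_j$ yields $\nabla h_j(x^{\ast})^T(x-x^{\ast}) \leq h_j(x) - h_j(x^{\ast})$, and multiplying by $\beta_j^{\ast} \geq 0$ preserves the inequality. The complementary slackness condition $\beta_j^{\ast} h_j(x^{\ast}) = 0$ then collapses this to $\beta_j^{\ast}\nabla h_j(x^{\ast})^T(x-x^{\ast}) \leq \beta_j^{\ast} h_j(x)$. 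Finally, feasibility gives $h_j(x) \leq 0$ and $\beta_j^{\ast} \geq 0$, so every term in the second sum is $\leq 0$. Combining, $f(x) - f(x^{\ast}) \leq 0$, as required.

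There is no real obstacle here: the argument is the standard KKT sufficiency chain, and every ingredient (differentiability of $f$, $g_i$, $h_j$ on the interior of $C$, concavity of $f$, convexity of $h_j$, linearity of $g_i$) is already built into the hypotheses of~(\ref{eq:cp}). The only thing to watch is the direction of each inequality, ensuring that the sign of $\beta_j^{\ast}$ and the sign of $h_j(x)$ cooperate in the right way; this is precisely why the theorem demands $\beta_j^{\ast} \geq 0$ rather than the reverse. No regularity/constraint-qualification hypothesis is needed for sufficiency, which is why the theorem can be stated this cleanly.
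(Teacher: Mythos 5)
Your proof is correct and is the standard KKT sufficiency chain; the paper does not prove Theorem \ref{th:cp} itself but simply cites it (Theorem 2.30 of \cite{ADSZ}, or \S 5.5.3 of \cite{BV}), and those sources give essentially the same argument: concavity of $f$ plus stationarity, vanishing of the linear-constraint terms, the convexity gradient inequality for the $h_j$, complementary slackness, and the sign conditions $\beta^{\ast}_j \geqslant 0$, $h_j(x) \leqslant 0$. The only implicit hypothesis you rely on, differentiability at $x^{\ast}$, is already part of the paper's setup, so nothing is missing.
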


Also recall that if a strictly concave function on a convex set has a maximum,
the maximizing point is unique (Theorem 2.22 in {\cite{ADSZ}}).

\begin{corollary}
  \label{cor:H}Suppose the function $f$ in (\ref{eq:lagr}) is the entropy, and
  all the constraints are linear and involve coefficients that are either 0 or
  1. Then the elements of $x^{\ast}$ have the form
  \[ x^{\ast}_k \; = \; \prod_{i \in E_k} \alpha'_i  \prod_{j \in I_k}
     \beta'_j, \hspace{2em} \text{where} \hspace{1em} \alpha'_i > 0, \beta'_j
     \in (0, 1], \]
  where $E_k$ is the set of indices of the equalities $g_i$ in which $x_k$
  appears, and $I_k$ is the set of indices of the inequalities $h_j$ where
  $x_k$ appears. The $j$-th inequality constraint can be satisfied either as a
  strict inequality or as an equality, and we must have
  \begin{equation}
    \label{eq:multprime} h_j (x^{\ast}) \ln \beta'_j = 0.
  \end{equation}
\end{corollary}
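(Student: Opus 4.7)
The plan is to apply Theorem \ref{th:cp} to the maximization of the entropy $f(x) = -\sum_k x_k \ln x_k$ on the open convex domain $\{x : x_k > 0\}$, where $f$ is differentiable and strictly concave. Because every constraint has $0/1$ coefficients, the equalities take the shape $g_i(x) = \sum_{k : i \in E_k} x_k - c_i$ and the inequalities the shape $h_j(x) = \sum_{k : j \in I_k} x_k - d_j$, so $x_k$ contributes to $g_i$ exactly when $i \in E_k$, and analogously for $h_j$.

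First I would write the Lagrangean (\ref{eq:lagr}) and compute
\[
\frac{\partial \Phi}{\partial x_k} \;=\; -\ln x_k - 1 - \sum_{i \in E_k} \alpha_i - \sum_{j \in I_k} \beta_j.
\]
The stationarity condition $\nabla_x \Phi(x^*, \alpha^*, \beta^*) = 0$ from Theorem \ref{th:cp} then yields
\[
x_k^* \;=\; \exp\Bigl(-1 - \sum_{i \in E_k} \alpha_i^* - \sum_{j \in I_k} \beta_j^*\Bigr),
\]
which immediately factorizes over $E_k$ and $I_k$. Introducing $\alpha'_i = e^{-\alpha_i^*}$ and $\beta'_j = e^{-\beta_j^*}$, the product structure of the corollary emerges; the residual factor $e^{-1}$ is absorbed into any one of the $\alpha'_i$ (in the concrete problems of the paper each $x_k$ appears in at least one equality, e.g.\ the total-sum constraint, so this absorption is unambiguous).

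Next I would check the sign conditions on the new multipliers. The $\alpha_i^*$ associated with equalities are unconstrained reals, so $\alpha'_i = e^{-\alpha_i^*} > 0$ automatically. For the inequalities, Theorem \ref{th:cp} requires $\beta_j^* \geqslant 0$, which gives $\beta'_j = e^{-\beta_j^*} \in (0, 1]$, as claimed. Uniqueness of the maximizer is inherited from the strict concavity of the entropy.

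Finally, the complementary-slackness condition $\beta_j^* \, h_j(x^*) = 0$ translates, via $\beta_j^* = -\ln \beta'_j$, into $h_j(x^*) \ln \beta'_j = 0$, which is exactly (\ref{eq:multprime}): either $h_j$ is tight and $\beta'_j$ is free in $(0,1]$, or $h_j$ is strict and $\beta'_j = 1$ (so that the factor drops out of the product). There is no real obstacle in this proof; the only step requiring care is the bookkeeping that turns the additive Lagrangean into a multiplicative product and that maps the KKT sign conventions $\alpha_i \in \mathbb{R}$, $\beta_j \geqslant 0$ into $\alpha'_i > 0$, $\beta'_j \in (0,1]$.
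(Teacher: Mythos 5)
Your proposal is correct and follows exactly the route the paper intends: the paper states Corollary \ref{cor:H} without a separate proof, treating it as an immediate consequence of Theorem \ref{th:cp}, and the computation you spell out (stationarity of the Lagrangean giving $x_k^* = e^{-1}\prod e^{-\alpha_i^*}\prod e^{-\beta_j^*}$, the substitution $\alpha'_i = e^{-\alpha_i^*}$, $\beta'_j = e^{-\beta_j^*}$, and the translation of $\beta_j^* \geqslant 0$ and complementary slackness into $\beta'_j \in (0,1]$ and (\ref{eq:multprime})) is precisely what the paper carries out in the special cases of {\S}\ref{sec:rcs} and the proof of Lemma \ref{le:ub}. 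Your remark about absorbing the residual $e^{-1}$ into an equality multiplier is a sensible piece of bookkeeping that the paper glosses over.
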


\begin{corollary}
  \label{cor:G}If the function $f$ in (\ref{eq:lagr}) is the entropy
  difference function $G$ of (\ref{eq:G}) and the constraints are as in
  Corollary \ref{cor:H}, then
  \[ x^{\ast}_k \; = \Bigl( \sum_{1 \leqslant \ell \leqslant n}
     x^{\ast}_{\ell} \Bigr) \prod_{i \in E_k} \alpha'_i  \prod_{j \in I_k}
     \beta'_j, \hspace{2em} \alpha'_i > 0, \beta'_j \in (0, 1], \]
  where the $\beta_j'$ must satisfy (\ref{eq:multprime}).
\end{corollary}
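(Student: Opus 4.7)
The plan is to imitate exactly the argument behind Corollary \ref{cor:H}, replacing the entropy $-\sum_k x_k \ln x_k$ by the function $G$. Since Proposition \ref{prop:G} asserts that $G$ is concave on the positive orthant, and the constraints $g_i, h_j$ are linear (hence both convex and concave), Theorem \ref{th:cp} applies: $x^{\ast}$ solves the concave program iff it is feasible and there exist multipliers $\alpha^{\ast}, \beta^{\ast}$ with $\beta_j^{\ast} \geqslant 0$, complementary slackness $\beta_j^{\ast} h_j(x^{\ast}) = 0$, and $\nabla_x \Phi(x^{\ast}, \alpha^{\ast}, \beta^{\ast}) = 0$ for the Lagrangean $\Phi = G(x) - \sum_i \alpha_i g_i(x) - \sum_j \beta_j h_j(x)$.

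The single new ingredient relative to Corollary \ref{cor:H} is the computation of the gradient of $G$. Writing $s = \sum_\ell x_\ell$, the chain rule gives
\[
\frac{\partial G}{\partial x_k} \;=\; \frac{\partial}{\partial x_k}\bigl(s \ln s\bigr) - \frac{\partial}{\partial x_k}\bigl(x_k \ln x_k\bigr) \;=\; (\ln s + 1) - (\ln x_k + 1) \;=\; \ln\frac{s}{x_k}.
\]
Since every $g_i$ and $h_j$ has 0/1 coefficients, $\partial g_i/\partial x_k$ is $1$ when $k$ appears in the $i$th equality (i.e.\ $i \in E_k$) and $0$ otherwise, and similarly for $h_j$. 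So the stationarity equation $\partial \Phi/\partial x_k = 0$ reads
\[
\ln\frac{s^{\ast}}{x_k^{\ast}} \;=\; \sum_{i \in E_k} \alpha_i^{\ast} + \sum_{j \in I_k} \beta_j^{\ast}.
\]

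Exponentiating and setting $\alpha_i' = e^{-\alpha_i^{\ast}}$ and $\beta_j' = e^{-\beta_j^{\ast}}$ yields the advertised form
\[
x_k^{\ast} \;=\; s^{\ast} \prod_{i \in E_k} \alpha_i' \prod_{j \in I_k} \beta_j',
\]
where $s^{\ast} = \sum_\ell x_\ell^{\ast}$ is the total sum at the optimum. The multiplier signs translate as follows: $\alpha_i^{\ast} \in \mathbb{R}$ gives $\alpha_i' > 0$, while the KKT requirement $\beta_j^{\ast} \geqslant 0$ gives $\beta_j' \in (0, 1]$. Finally, complementary slackness $\beta_j^{\ast} h_j(x^{\ast}) = 0$ is equivalent, after taking logs in the reparameterization, to $h_j(x^{\ast}) \ln \beta_j' = 0$, which is precisely (\ref{eq:multprime}).

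Nothing here is really hard; the only point worth guarding against is bungling the chain rule on $s\ln s$, where the coefficient $s$ in front of the product looks as if it should introduce extra $\ln s$ factors but in fact cancels cleanly against the $\partial s/\partial s = 1$ coming from $s$ itself, leaving the clean identity $\partial G/\partial x_k = \ln(s/x_k)$. Once that identity is in hand, the proof is a line-by-line transcription of the proof of Corollary \ref{cor:H}.
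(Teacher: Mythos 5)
Your proof is correct and is exactly the argument the paper leaves implicit: compute $\nabla G$ (getting $\partial G/\partial x_k = \ln(s/x_k)$), apply Theorem \ref{th:cp} to the Lagrangean, and reparameterize the multipliers as $\alpha_i' = e^{-\alpha_i^{\ast}}$, $\beta_j' = e^{-\beta_j^{\ast}}$, precisely in parallel with Corollary \ref{cor:H}. The only nitpick is your ``iff'': Theorem \ref{th:cp} as stated gives only sufficiency, so concluding that the optimum \emph{must} have this product form tacitly invokes the necessity of the KKT conditions for linearly constrained differentiable concave programs, a step the paper also takes for granted.
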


\subsection{Proofs for {\S}\ref{sec:bounds1}}

\subsection*{Proof of Lemma \ref{le:ub}}

The Lagrangean is
\[ \Phi = - \sum_i x_i \ln x_i - \lambda_1 (x_1 - b_1) - \cdots - \lambda_n
   (x_n - b_n) - \mu (x_1 + \cdots + x_n - a) . \]
Setting $\nabla \Phi$ to 0, we have for all $i$
\begin{equation}
  \label{eq:ub1} x_i \; = \; e^{- \lambda_i - \mu - 1} \; \rightsquigarrow \;
  \lambda_i \mu .
\end{equation}
By Corollary \ref{cor:H}, for a point $(x_1, \ldots, x_n)$ given by
(\ref{eq:ub1}) to solve the problem the following must hold
\begin{enumeratealpha}
  \item $(x_1, \ldots, x_n) \tmop{must} \tmop{be} \tmop{feasible}$,
  
  \item By (\ref{eq:multprime}), we must have $\lambda_i \in (0, 1]$ for all
  $i$, and $(x_i - b_i) \ln \lambda_i = 0$.
\end{enumeratealpha}
Now arrange the $b_i$ and $x_i$ as stated in part (i) of the lemma. Consider
the solution
\begin{equation}
  \begin{array}{ccc}
    x_i = b_i = \lambda_i \mu, & \text{with $\lambda_i \leqslant 1$}, & i = 1,
    \ldots, k\\
    x_i = \mu, & \text{with $\lambda_i = 1$,} & i = k + 1, \ldots, n
  \end{array} \label{eq:ub2}
\end{equation}
in accordance with (b) above, where $k$ is as yet undetermined. Putting
(\ref{eq:ub2}) into the equality constraint we get $b_1 + \cdots + b_k + (n -
k) \mu = a$. It follows that
\begin{equation}
  \label{eq:ub3} x_{k + 1} = \cdots = x_n = \mu = \frac{a - (b_1 + \cdots +
  b_k)}{n - k} .
\end{equation}
Now let $k$ be chosen as in part (ii) of the lemma. Then the solution $(x_1,
\ldots, x_n)$ given by (\ref{eq:ub2}), (\ref{eq:ub3}) is feasible as required
in (a) above: by the definition of $k$, $b_1 + \cdots + b_{k + 1} + (n - k -
1) b_{k + 1} > a$, which is equivalent to $\mu < b_{k + 1}$.

To satisfy (b), we need to check that $\lambda_i \leqslant 1$ for $i = 1,
\ldots, k$. From (\ref{eq:ub2}) and (\ref{eq:ub3}),
\[ \lambda_i = \frac{(n - k) b_i}{a - (b_1 + \cdots + b_k)} \hspace{2em}
   \tmop{and} \hspace{2em} \lambda_i \leqslant 1 \hspace{1em} \Leftrightarrow
   \hspace{1em} a - (b_1 + \cdots + b_k) \geqslant (n - k) b_i . \]
But this last condition holds $\forall i \leqslant k$ by the definition of
$k$. We have found a solution $x^{\ast}$, and because the entropy function is
strictly concave, this solution is unique and we are done. It remains to show
that it is possible to find a $k$ as required in part (ii) of the lemma. This
is done in Proposition \ref{prop:k} below.

\begin{proposition}
  \label{prop:k}Given $b_0 = 0 < b_1 \leqslant b_2 \leqslant \cdots \leqslant
  b_n$ and $0 < a \leqslant b_1 + \cdots + b_n$, there is a $k \in \{0,
  \ldots, n\}$ s.t. the inequality
  \[ a - (b_1 + \cdots + b_j) \geqslant (n - j) b_j \]
  holds for all $j \leqslant k$ and for no larger $j$.
\end{proposition}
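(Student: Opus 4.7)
The plan is to reduce the existence claim to a one-variable monotonicity observation. Define, for $j \in \{0, 1, \ldots, n\}$,
\[
f(j) \;=\; a - (b_1 + \cdots + b_j) - (n - j) b_j,
\]
so that the inequality in the statement is exactly $f(j) \geq 0$. If I can show (a) $f(0) \geq 0$ and (b) $f$ is non-increasing in $j$, then the set $S = \{j : f(j) \geq 0\}$ is a nonempty downward-closed subset of $\{0, 1, \ldots, n\}$, hence equal to $\{0, 1, \ldots, k\}$ for $k = \max S$, and this $k$ is the required index.

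Fact (a) is immediate: using $b_0 = 0$, $f(0) = a > 0$. For fact (b) I would just compute the telescoping difference,
\[
f(j+1) - f(j) \;=\; -b_{j+1} + (n - j) b_j - (n - j - 1) b_{j+1} \;=\; -(n - j)(b_{j+1} - b_j),
\]
which is $\leq 0$ for $0 \leq j < n$ because the $b_j$ are arranged in non-decreasing order. This single identity is the only step that requires any calculation.

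Combining (a) and (b): if $j \in S$, then every $j' \leq j$ satisfies $f(j') \geq f(j) \geq 0$, so $S$ is an initial segment $\{0, 1, \ldots, k\}$ as claimed, and the inequality holds for all $j \leq k$ and fails for all $j > k$. I do not anticipate any real obstacle; the hypothesis $a \leq b_1 + \cdots + b_n$ is not even needed for the existence of $k$ in $\{0, \ldots, n\}$ (it just ensures $f(n) \leq 0$, distinguishing the generic case $k < n$ from the boundary case $k = n$ where the bound $a$ exactly saturates $b_1 + \cdots + b_n$).
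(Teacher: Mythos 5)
Your proposal is correct and follows essentially the same route as the paper: define $\varphi(j) = a - (b_1 + \cdots + b_j) - (n-j)b_j$, note $\varphi(0) = a > 0$, and use monotonicity (which you verify explicitly via the telescoping difference $-(n-j)(b_{j+1}-b_j) \leq 0$, a step the paper merely asserts as easy). Your side remark that $a \leq b_1 + \cdots + b_n$ is only needed to pin down the boundary case $k = n$ is also consistent with the paper's closing observation that $k = n$ iff $b_1 + \cdots + b_n = a$.
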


\begin{proof}
  Consider the function $\varphi (j) = a - (b_1 + \cdots + b_j) - (n - j)
  b_j$, $j \in \{0, 1, \ldots, n\}$. It is easy to see that $\varphi (j)
  \geqslant \varphi (j + 1)$ for all $j$, so this function is monotone
  decreasing. Further, $\varphi (0) = a > 0$ and $\varphi (n) = a - (b_1 +
  \cdots + b_n) \leqslant 0$. So there is a $k \leqslant n$ s.t. $\varphi (j)
  \geqslant 0$ for $j \leqslant k$, and $\varphi (j) < 0$ for $j > k$, as
  claimed. Note that $k = n$ iff $b_1 + \cdots + b_n = a$.
\end{proof}

\subsection{Proofs for {\S}\ref{sec:boundsrc}}

\begin{proposition}
  \label{prop:G}The function
  \begin{eqnarray*}
    G (x_1, \ldots, x_n) & = & \Bigl( \sum_i x_i \Bigr) \ln \Bigl( \sum_i x_i
    \Bigr) - \sum_i x_i - \sum_i (x_i \ln x_i - x_i)\\
    & = & \Bigl( \sum_i x_i \Bigr) \ln \Bigl( \sum_i x_i \Bigr) - \sum_i x_i
    \ln x_i
  \end{eqnarray*}
  is concave over the domain $x_1 > 0, \ldots, x_n > 0$.
\end{proposition}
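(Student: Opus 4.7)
The plan is to verify concavity by computing the Hessian of $G$ directly and then showing it is negative semi-definite via the Cauchy--Schwarz inequality.

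First, let $s = \sum_i x_i$. A short computation gives $\partial G/\partial x_k = \ln s - \ln x_k$, from which the second partials are
\[
\frac{\partial^2 G}{\partial x_k \partial x_l} = \frac{1}{s} \quad (k \neq l), \qquad \frac{\partial^2 G}{\partial x_k^2} = \frac{1}{s} - \frac{1}{x_k}.
\]
So the Hessian can be written as $H = (1/s)\, \mathbf{1}\mathbf{1}^T - D$, where $D = \operatorname{diag}(1/x_1, \ldots, 1/x_n)$ and $\mathbf{1}$ is the all-ones vector.

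Next I would show $y^T H y \leqslant 0$ for every $y \in \mathbb{R}^n$. Explicitly,
\[
y^T H y \; = \; \frac{1}{s}\Bigl(\sum_i y_i\Bigr)^2 - \sum_i \frac{y_i^2}{x_i}.
\]
By Cauchy--Schwarz applied to the vectors $(\sqrt{x_i})$ and $(y_i/\sqrt{x_i})$,
\[
\Bigl(\sum_i y_i\Bigr)^2 \; = \; \Bigl(\sum_i \sqrt{x_i}\,\frac{y_i}{\sqrt{x_i}}\Bigr)^2 \; \leqslant \; \Bigl(\sum_i x_i\Bigr)\Bigl(\sum_i \frac{y_i^2}{x_i}\Bigr) \; = \; s\sum_i \frac{y_i^2}{x_i},
\]
which gives $y^T H y \leqslant 0$. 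Hence $H$ is negative semi-definite on the domain $x_i > 0$, and $G$ is concave there.

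There is no real obstacle here; the only thing to note is that $H$ is not strictly negative definite (equality in Cauchy--Schwarz holds when $y$ is proportional to $x$, which reflects the fact that $G$ is invariant under the scaling-like direction encoded by the rank-one correction $(1/s)\mathbf{1}\mathbf{1}^T$), so we get concavity but not strict concavity. This is consistent with how $G$ is used in the paper: strict concavity of the optimization problem comes from combining $G$ with linear constraints that break this degenerate direction. A slicker alternative, if one prefers to avoid the Hessian, is to observe that $G(x) = \sum_i [-x_i \ln(x_i/s)]$ with each summand jointly concave in $(x_i, s)$ (an easy $2\times 2$ Hessian check), and then invoke the fact that $s$ is linear in $x$; but the direct Cauchy--Schwarz argument above seems the cleanest.
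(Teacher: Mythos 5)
Your proof is correct, and its skeleton is the same as the paper's: both compute the Hessian $H = (1/s)\mathbf{1}\mathbf{1}^T - \operatorname{diag}(1/x_1,\ldots,1/x_n)$ and reduce concavity to the quadratic-form inequality $\bigl(\sum_i y_i\bigr)^2 \leqslant s\sum_i y_i^2/x_i$. The only genuine divergence is in how that inequality is established. The paper normalizes with $\xi_i = x_i/s$ and then \emph{minimizes} the convex function $\xi \mapsto \sum_i y_i^2/\xi_i$ over the simplex $\sum_i \xi_i = 1$, finding the minimum value $\bigl(\sum_i y_i\bigr)^2$ at $\xi_i^{\ast} = y_i/\sum_j y_j$; you instead invoke Cauchy--Schwarz on the vectors $(\sqrt{x_i})$ and $(y_i/\sqrt{x_i})$. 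These are two proofs of the same inequality (the paper's optimization argument is in effect a from-scratch derivation of this instance of Cauchy--Schwarz), so nothing substantive is gained or lost, though your route is shorter and avoids a side optimization whose own justification (existence of the minimizer, checking the stationary point) the paper leaves implicit. Your closing remark is a worthwhile addition the paper does not make: equality holds exactly when $y \propto x$, which matches the degree-one homogeneity $G(tx) = tG(x)$, so $G$ is concave but not strictly concave, and uniqueness of the maximizer in {\S}5 must come from the constraints rather than from $G$ itself. The alternative perspective-function argument you sketch ($-x\ln(x/s)$ jointly concave in $(x,s)$, composed with the linear map $x \mapsto \sum_i x_i$) is also valid and would bypass the Hessian entirely.
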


This is probably known somewhere in the information theory literature, but I
don't know where. So a proof is presented below.

\begin{proof}
  By Theorem 2.14 of {\cite{ADSZ}} it suffices to show that $\mathcal{H}(x) =
  \nabla^2 G (x)$, the Hessian of $G$, is negative semi-definite. We find
  \[ \mathcal{H}(x) \; = \; \frac{1}{x_1 + \cdots + x_n} U_n - \tmop{diag}
     \left( \frac{1}{x_1}, \ldots, \frac{1}{x_n} \right), \]
  where $U_n$ is a matrix all of whose entries are 1, and for an arbitrary
  vector $y = (y_1, \ldots, y_n)$ we must have $y^T \mathcal{H y} \leqslant
  0$. To establish this, first write $\mathcal{H}$ as
  \[ \; \mathcal{H}(x) \; = \; \frac{1}{x_1 + \cdots + x_n}  \left( U_n -
     \tmop{diag} \left( \frac{x_1 + \cdots + x_n}{x_1}, \ldots, \frac{x_1 +
     \cdots + x_n}{x_n} \right) \right) . \]
  Now define $\xi_i = x_i / (x_1 + \cdots + x_n)$. The condition $y^T
  \mathcal{H y} \leqslant 0$ is then equivalent to
  \begin{equation}
    \label{eq:yconv} (y_1 + \cdots + y_n)^2 \; \leqslant \; y_1^2 / \xi_1 +
    \cdots + y_n^2 / \xi_n,
  \end{equation}
  where the $\xi_i$ are positive and sum to 1. The truth of (\ref{eq:yconv})
  follows from the fact that $y^2_1 / \xi_1 + \cdots + y^2_n / \xi_n$ is a
  convex function of $\xi_1, \ldots, \xi_n$ over the domain $\xi_1 > 0,
  \ldots, \xi_n > 0$, and its minimum under the constraint $\xi_1 + \cdots +
  \xi_n = 1$ occurs at $\xi^{\ast}_i = y_i / (y_1 + \cdots + y_n)$. So the
  least value of the r.h.s. of (\ref{eq:yconv}) as a function of $\xi_1,
  \ldots, \xi_n$ is $(y_1 + \cdots + y_n)^2$.
\end{proof}

\subsection*{Proof of Proposition \ref{prop:rcb}}

We first give a straightforward proof assuming that $x_{i j} \in \mathbb{N}$.
Suppose there is a matrix $X$ s.t. for some $i, j$ row $i$ sums to less than
$u_i$ and column $j$ to less than $v_j$. Further, let $X$ have total sum $s$.
Consider the matrix $X'$ formed by adding 1 to $x_{i j}$. First, if $X$
satisfies the constraints, so does $X'$. Second, $\#(X' |I) /\#(X|I) = (s + 1)
/ (x_{i j} + 1) > 1$. Thus $X'$ has more realizations than $X$, and so $X$
cannot be the most likely matrix $\hat{X}$.

To give a proof assuming that the elements of $X$ are non-negative reals, by
Corollary \ref{cor:G} we must have $\hat{x}_{i j} = ( \sum_{k, l} \hat{x}_{k
l}) \lambda_i \mu_j$. Now if there is a pair $i, j$ s.t. $\sum_j \hat{x}_{i j}
- u_i < 0$ and $\sum_i \hat{x}_{i j} - v_j < 0$, we must have $\lambda_i =
\mu_j = 1$, so $\hat{x}_{i j} = \sum_{k, l} \hat{x}_{k l}$. Thus all other
elements of $\hat{X}$must be 0. Further, $\hat{x}_{i j} \leqslant \min (u_i,
v_j)$. But it is easy to see that this matrix cannot have the most
realizations.

\subsection*{Proof of Proposition \ref{prop:rcb2}}

Consider the function $\varphi (\ell) = u - (v_1 + \cdots + v_{\ell}) - (n -
\ell) v_{\ell + 1}$. It is easy to check that $\varphi (\ell) \nearrow$ as
$\ell \nearrow$. Further, $\varphi (0) = u - n v_1 \geqslant 0$ if $u / n
\geqslant v_1$. Finally, $\varphi (n - 1) = u - (v_1 + \cdots + v_n) < 0$.
Thus there is a least $\ell$, s.t. $\varphi (\ell) < 0$, $1 \leqslant \ell < n
- 1$, and $\varphi (\ell - 1) \geqslant 0$. Let that $\ell$ be $k$. The two
conditions $\varphi (k) < 0$ and $\varphi (k - 1) \geqslant 0$ establish what
is claimed.

\subsection{Proofs for {\S}\ref{sec:bounds3}}

The following result is a variation of Lemma \ref{le:ub}: it says that the
most likely vector with sum bounded by $a$ and elements bounded by the vector
$b$ is the {\maxent} vector with sum equal to $a$ and elements bounded by $b$.

\begin{lemma}
  \label{le:ub2}The most likely vector $x^{\ast} = (x_1^{\ast}, \ldots,
  x^{\ast}_m)$ satisfying $\forall i \; 0 \leqslant x_i \leqslant b_i$ and
  $x_1 + \cdots + x_m \leqslant a$, $a, b_i \in \mathbb{N}$, is found as
  follows. If $a \leqslant b_1 + \cdots + b_m$, the inequality in this
  constraint can be replaced by equality and then $x^{\ast}$ is given by Lemma
  \ref{le:ub}. If $a > b_1 + \cdots + b_m$, then $x^{\ast} = (b_1, \ldots,
  b_m)$.
\end{lemma}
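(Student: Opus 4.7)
The plan is to treat the two cases separately, and in each case reduce the problem to one already solved in the paper via a simple ball-adding monotonicity argument of the kind used in \S\ref{sec:bounds2}.

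First I would dispose of the easier case $a > b_1 + \cdots + b_m$. Here the sum constraint $x_1 + \cdots + x_m \leqslant a$ is automatically implied by the individual bounds $x_i \leqslant b_i$, so it is vacuous. The problem thus reduces to maximizing $\#(x)$ subject only to $x_i \leqslant b_i$, $x_i \in \mathbb{N}$. If some coordinate at the optimum satisfies $x_i^{\ast} < b_i$, incrementing $x_i^{\ast}$ by $1$ preserves feasibility and multiplies $\#(x)$ by $(s+1)/(x_i^{\ast}+1) > 1$, contradicting optimality; hence $x_i^{\ast} = b_i$ for every $i$.

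For the substantive case $a \leqslant b_1 + \cdots + b_m$, the key step is to show the total-sum inequality must be saturated at the optimum, after which Lemma \ref{le:ub} gives $x^{\ast}$ directly. Suppose $x^{\ast}$ is optimal with $s^{\ast} := \sum_i x_i^{\ast} < a$. Since $\sum_i b_i \geqslant a > s^{\ast}$, at least one coordinate satisfies $x_j^{\ast} < b_j$. Form $x'$ by adding $1$ to $x_j^{\ast}$: feasibility is preserved because both $x_j' \leqslant b_j$ and the new sum $s^{\ast}+1 \leqslant a$ still hold, while $\#(x')/\#(x^{\ast}) = (s^{\ast}+1)/(x_j^{\ast}+1) > 1$. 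This contradicts optimality, so $\sum_i x_i^{\ast} = a$, reducing the problem to the equality-constrained {\maxent} problem solved by Lemma \ref{le:ub}.

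The only subtlety worth flagging is that we are in the discrete regime ($a, b_i \in \mathbb{N}$), whereas Lemma \ref{le:ub} is phrased in the continuous Stirling framework. The ball-adding argument is genuinely discrete, which is precisely what legitimately forces $\sum_i x_i^{\ast} = a$; from that point on, maximizing $\#(x \mid I)$ and maximizing the entropy coincide to within the Stirling approximation, as emphasized in \S1. I therefore expect no serious obstacle.
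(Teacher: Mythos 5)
Your proposal is correct and follows essentially the same route as the paper: the easy case $a > b_1 + \cdots + b_m$ is dispatched by the ball-adding argument of \S\ref{sec:bounds2}, and in the case $a \leqslant b_1 + \cdots + b_m$ the same increment-by-one argument shows the total-sum constraint must be saturated, after which Lemma \ref{le:ub} applies. The paper's proof even makes the same parenthetical acknowledgment of the discrete-to-continuous (Stirling) passage that you flag at the end.
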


\begin{proof}
  First we reduce the problem in $\mathbb{N}$ to another problem in
  $\mathbb{N}$. Suppose that $a \leqslant b_1 + \cdots + b_m$. Let $y = (y_1,
  \ldots, y_m), y_i \in \mathbb{N}$ be the most likely vector
  \tmtextit{summing} to $\alpha \leqslant a - 1$ and satisfying $y_i \leqslant
  b_i$. Pick a $y_j$ s.t. $y_j < b_j$; this exists because $y$ sums to
  $\alpha$, which is strictly less than $b_1 + \cdots + b_m$. But then the
  vector $y' = (y_1, \ldots, y_{j - 1}, y_j + 1, y_{j + 1}, \ldots, y_m)$ sums
  to $\alpha + 1$, satisfies the $b$-constraints, and by the argument given in
  {\S}\ref{sec:bounds2}, $\#(y' \mid \alpha + 1) >\#(y \mid \alpha)$. So by
  increasing the allowed sum $\alpha$ we get a more likely vector. It follows
  that the most likely vector $x^{\ast}$ in $\mathbb{N}$ satisfying the
  constraints sums to exactly $a$, and (an approximation in $\mathbb{R}$) can
  therefore be found by Lemma \ref{le:ub}.
  
  Now let $a > b_1 + \cdots + b_m$. In that case the $a$-constraint is
  irrelevant and we have precisely the problem solved in {\S}\ref{sec:bounds2}
  for a matrix; so $x^{\ast} = (b_1, \ldots, b_m)$.
\end{proof}

\subsection{\label{sec:appsymm}Proofs for {\S}\ref{sec:fixd}}

\subsubsection{Proof of Proposition \ref{prop:eqlambda}}

We already noted that the function
\[ f (\lambda) = \sqrt{1 - 4 r_1 / \lambda^2} + \cdots + \sqrt{1 - 4 r_m /
   \lambda^2} - 2 (r_{m + 1} + \cdots + r_n) / \lambda^2 - (m - 2) \]
is monotone increasing for any $m \leqslant n$. We will now show that $f (4 /
3) > 0$ and $f (2 \sqrt{r_{\max}}) \leqslant 0$.

\paragraph{$f > 0$ at 4/3}

This reduces to showing that
\begin{equation}
  \label{eq:fl} \sqrt{1 - 9 / 4 r_1} + \cdots + \sqrt{1 - 9 / 4 r_m} - 9 / 8
  (r_{m + 1} + \cdots + r_n) \; > \; m - 2.
\end{equation}
The l.h.s. has the form $\sum_i \varphi_i (r_i)$ where $\varphi_i (\cdot)$ is
concave, so it is a concave function of $r_1, \ldots, r_n$ (Prop. 2.16 of
{\cite{ADSZ}}) over the convex domain defined by $r_1 + \cdots + r_n = 1$ and
$0 < r_i \leqslant 1 / 3$. Therefore its minimum occurs on the boundary of the
domain ({\cite{ADSZ}}, Prop. 2.25.) The boundary consists of all points s.t.
three of the $r_i$ are 1/3 and the rest are 0. There are several cases to
consider. First, it is easy to check that (\ref{eq:fl}) holds for $m = 0$ and
$m = 1$.

Next let $m = 2$. What we want to prove reduces to $\sqrt{1 - 9/4 r_1} +
\sqrt{1 - 9/4 r_2} - 9/8 (r_3 + \cdots + r_n) > 0$. The possibilities for
the boundary are $r_1 = r_2 = r_3 = 1/3$, or $r_1 = 1 / 3, r_3 = r_4 = 1/3$, or
$r_3 = r_4 = r_5 = 1/3$, and the desired inequality holds under any 
of these conditions.

Lastly suppose that $m \geqslant 3$, and, without loss of generality, that
$r_1 = r_2 = r_3 = 1 / 3$. Then (\ref{eq:fl}) becomes $3 / 2 + m - 3 > m - 2$,
which is true. Next, let $r_1 = r_2 = 1 / 3$, $r_{m + 1} = 1 / 3$;
(\ref{eq:fl}) becomes $1 + m - 2 - 3 / 8 > m - 2$, which is also true. The
remaining two cases are $r_1 = 1 / 3$, $r_{m + 1} = r_{m + 2} = 1 / 3$, and
$r_{m + 1} = r_{m + 2} = r_{m + 3} = 1 / 3$, and (\ref{eq:fl}) holds for both.

\paragraph{$f \leqslant 0$ at $2 \sqrt{r_{\max}}$}

Without loss of generality we may assume that $r_{\max} = r_1$ because this
makes the notation simpler. Then $f(2 \sqrt{r_{\max}}) < 0$ reduces to
establishing
\begin{equation}
  \label{eq:o/o} \sqrt{1 - r_2 / r_1} + \cdots + \sqrt{1 - r_m / r_1} -
  \frac{r_{m + 1} + \cdots + r_n}{2 r_1} \leqslant m - 2.
\end{equation}
We will find the maximum of the function on the l.h.s., treating $r_1$ as
known for the moment. Using Theorem \ref{th:cp}, the l.h.s. is a concave
function of $r_2, \ldots, r_n$, and under the constraint $r_2 + \cdots + r_n =
1 - r_1$ it has a unique maximum at the point determined by
\[ \frac{1}{2 r_1} = \frac{1}{2 r_1} \Bigl( 1 - \frac{r_2}{r_1} \Bigr)^{- 3 /
   2} = \; \cdots \; = \frac{1}{2 r_1} \Bigl( 1 - \frac{r_m}{r_1} \Bigr)^{- 3
   / 2} . \]
Thus the maximum occurs at the point $r_2 = \cdots = r_m = 0$ and $r_{m + 1} +
\cdots + r_n = 1 - r_1$, where the value of the function is $m - 1 - (1 - r_1)
/ (2 r_1)$. Therefore (\ref{eq:o/o}) will hold iff $r_1 \leqslant 1 / 3$.

The above proof assumed that $m < n$. When $m = n$, (\ref{eq:o/o}) becomes
\begin{equation}
  \label{eq:o/o/o} \sqrt{1 - r_2 / r_1} + \cdots + \sqrt{1 - r_n / r_1}
  \leqslant n - 2.
\end{equation}
As before, the l.h.s. is a concave function for fixed $r_1$, and its maximum
occurs at $r_2 = \cdots = r_n = (1 - r_1) / (n - 1)$. Thus (\ref{eq:o/o/o}) holds
if
\[ (n - 1) \sqrt{1 - \frac{1 - r_1}{(n - 1) r_1}} \leqslant n - 2, \]
which is true if $r_1 \leqslant 1 / 3$.

\paragraph{Proof of Proposition \ref{prop:eqlambda2}}

We re-use the proof of Proposition \ref{prop:eqlambda}. Define $f (\lambda) =
\sqrt{1 - 4 r_{I_1} / \lambda^2} + \cdots + \sqrt{1 - 4 r_{I_m} / \lambda^2} -
(m - 2)$. Setting $\rho_i = r_{I_i} / \sigma$, this becomes
\[ f (\lambda) = \sqrt{1 - 4 \sigma \rho_1 / \lambda^2} + \cdots + \sqrt{1 - 4
   \sigma \rho_m / \lambda^2} - (m - 2), \hspace{2em} \sum_i \rho_i = 1. \]
Then $f (4 \sqrt{\sigma} / 3) > 0$ is equivalent to $\sqrt{1 - 9 / 4 \rho_1} +
\cdots + \sqrt{1 - 9 / 4 \rho_m} > m - 2$; but this is a special case of
(\ref{eq:fl}). It remains to show that $f (2 \sqrt{r_{\max}}) = f (2
\sqrt{\sigma \rho_{\max}}) \leqslant 0$. Assuming w.l.o.g. that $\rho_{\max} =
\rho_1$, this reduces to $\sqrt{1 - \rho_2 / \rho_1} + \cdots + \sqrt{1 -
\rho_2 / \rho_m} \leqslant m - 2$, which follows from (\ref{eq:o/o/o}).

\end{document}